\newsavebox{\theorembox}
\newsavebox{\lemmabox}
\newsavebox{\corollarybox}
\newsavebox{\propositionbox}
\newsavebox{\examplebox}
\newsavebox{\conjecturebox}
\newsavebox{\algbox}
\newsavebox{\qbox}
\newsavebox{\problembox}
\newsavebox{\definitionbox}
\newsavebox{\assumptionbox}
\newsavebox{\hypothesisbox}
\savebox{\theorembox}{\noindent\bf Theorem}
\savebox{\lemmabox}{\noindent\bf Lemma}
\savebox{\corollarybox}{\noindent\bf Corollary}
\savebox{\propositionbox}{\noindent\bf Proposition}
\savebox{\examplebox}{\noindent\bf Example}
\savebox{\conjecturebox}{\noindent\bf Conjecture}
\savebox{\algbox}{\noindent\bf Algorithm}
\savebox{\qbox}{\noindent\bf Question}
\savebox{\definitionbox}{\noindent\bf Definition}
\savebox{\problembox}{\noindent\bf Problem}
\savebox{\assumptionbox}{\noindent\bf Assumption}
\savebox{\hypothesisbox}{\noindent\bf Hypothesis}
\newtheorem{theorem}{\usebox{\theorembox}}
\newtheorem{lemma}[theorem]{\usebox{\lemmabox}}
\newtheorem{corollary}[theorem]{\usebox{\corollarybox}}
\newtheorem{proposition}[theorem]{\usebox{\propositionbox}}
\newenvironment{proof}{\noindent\bf{Proof.}\rm}{\hfill$\blacksquare$\bigskip}
\begin{document}

\title{Recoverable Values for Independent Sets}

\author{ \\Uriel Feige and Daniel Reichman
\thanks{Weizmann Institute of Science, PO Box 26, Rehovot 76100,
Israel. Email: {\tt{uriel.feige@weizmann.ac.il, daniel.reichman@gmail.com.}} Work supported in part by the Israel
Science Foundation (grant No. 873/08)}
}

\maketitle

\begin{abstract}
The notion of {\em recoverable value} was advocated in work of Feige, Immorlica, Mirrokni and Nazerzadeh [Approx 2009] as a measure of quality for approximation algorithms. There this concept was applied to facility location problems. In the current work we apply a similar framework to the maximum independent set problem (MIS). We say that an approximation algorithm has {\em recoverable value} $\rho$, if for every graph it recovers an independent set of size at least $\max_I \sum_{v\in I} \min[1,\rho/(d(v) + 1)]$, where $d(v)$ is the degree of vertex $v$, and $I$ ranges over all independent sets in $G$. Hence, in a sense, from every vertex $v$ in the maximum independent set the algorithm recovers a value of at least $\rho/(d_v + 1)$ towards the solution. This quality measure is most effective in graphs in which the maximum independent set is composed of low degree vertices. It easily follows from known results that some simple algorithms for MIS ensure $\rho \ge 1$. We design a new randomized algorithm for MIS that ensures an expected recoverable value of at least $\rho \ge 7/3$. In addition, we show that approximating MIS in graphs with a given $k$-coloring within a ratio larger than $2/k$ is unique games hard. This rules out a natural approach for obtaining $\rho \ge 2$.
\end{abstract}

\section{Introduction}

The notion of {\em recoverable value} was advocated in work of Feige, Immorlica, Mirrokni and Nazerzadeh~\cite{PASS} as a measure of quality for approximation algorithms. This notion leads to greater expressive power for stating the guarantees of approximation algorithms (compared to the standard notion of approximation ratio), by this leading to greater differentiation among the performance guarantees of different algorithms. The hope is that this concept will lead to the design of new algorithms with superior performance with respect to the recoverable value measure (regardless of whether the classic approximation ratio differs from that of existing algorithms), and moreover, that these algorithms will have
better performance in practice (at least in some interesting special cases).

In~\cite{PASS}, the term PASS approximation (where PASS is an acronym for {\em PArameterized by the Signature of the Solution}) was used in order to capture the two main features that we wish the recoverable value to have. One feature is that the recoverable value is expressed in terms of properties of the (unknown) solution, rather than of the input instance. The other is that the property of the solution that it refers to is not some aggregate property (such as average degree), but rather some {\em signature} in which the contribution of each individual component of the solution is considered separately. Rather than try to present general principles here, let us focus on the problem studied in the current paper, that of maximum independent set, and specialize the notion of recoverable value to this problem.

We use the following notation.
All graphs in this work are undirected. The degree $d(v)$ of vertex $v$ is the number of neighbors of $v$. The set of neighbors of $v$ is $N(v)$. The average degree of a graph is denoted by $d_{avg}$, and $d_{avg(U)} = \frac{1}{|U|}\sum_{v\in U} d(v)$ denotes the average over degrees of vertices in a set $U$. An independent set in a graph is a set of vertices $I$ such that every two vertices in $I$ are non-neighbors. We shall refer to the problem of finding an independent set of maximum cardinality as MIS.  The independence ratio of $G=(V,E)$ is $\alpha(G) = |I_{max}|/|V|$, where $|I_{max}|$ is the size of a maximum independent set $I$. In the maximum weight independent set problem MWIS, every vertex $v$ has a nonnegative weight $w_v$ and the goal is to find an independent set of maximum weight.

We now present our notion of recoverable value for MIS. For an independent set $I$, we define its {\em signature} to be the sequence of degrees of vertices in $I$. The value that we shall want to recover from each vertex of $I$ depends on its degree. With every degree $d$ we associate a recoverable value of $0 \le \rho_d \le 1$, and wish our approximation algorithms to find an independent set of size at least $\sum_{v \in I} \rho_{d(v)}$. The independent set $I$ is not known to the algorithm -- it is only used in analyzing its performance measure. Hence this performance guarantee holds simultaneously with respect to all independent sets in the graph, including the independent set $I$ that happens to maximizes the expression $\sum_{v \in I} \rho_{d(v)}$ (this $I$ might not be the maximum independent set in the graph). The notion of recoverable value easily generalizes to MWIS, by multiplying each term by $w(v)$. For randomized algorithms, one considers the expected size (or weight for MWIS) of the independent set that they return.

Intuitively, the smaller the degree of $v$ the more likely algorithms are to place $v$ in an independent set (because $v$ excludes a smaller number of other vertices), and hence the higher we would like its recoverable value to be. Hence it is natural for $\rho_d$ to be a non-increasing function of $d$. We find it convenient to introduce a parameter $\rho$ (that we shall attempt to maximize later) and to consider the function $\rho_d = \min[1,\rho/(d+1)]$. (Not allowing the recoverable value to exceed~1 is necessary, as we cannot find a solution larger than the optimal solution.) For graphs of minimum degree at least $\rho - 1$, such approximation algorithms need to find an independent set of size at least $\sum_{v \in I} \rho/(d(v) + 1)$.

We refer to $\rho$ in the expression $\min[1,\rho/(d(v)+1)]$ as the {\em canonical} recoverable value (though we sometimes omit the word canonical for brevity).  For comparison with some previously published algorithms, it is useful to note that for every set $U$ of vertices, $\sum_{v\in U} 1/(d_v + 1) \ge |U|/(d_{avg(U)}+1)$ (with equality only if all vertices in $U$ have the same degree). Hence for a given value of $\rho$, our notion of recoverable value that is based on the signature of $I$ (its degree sequence) is more demanding than had we only considered the average degree of vertices in $I$.

\subsection{Our results}

It is not hard to see that some known algorithms achieve a canonical recoverable value $\rho \ge 1$. We show (see Proposition~\ref{pro:negative} in the appendix) that they do not achieve a value of $\rho$ bounded away from~1.

One can readily observe that $\rho_0 = \rho_1 = 1$. We use a procedure that we call {\em 2-elimination} to that that for MIS one may enforce $\rho_2 = 1$ as well.

Thereafter, we design relatively simple new algorithms that achieve a recoverable value of $\rho \ge 2$ for MWIS and $\rho \ge 7/3$ for MIS. It is NP-hard to achieve $\rho = 4$, as this will imply exact solution of MIS in 3-regular graphs, a problem that is NP-hard and APX-hard~\cite{Berman}. However, in graphs in which the minimum degree $\delta$ is large we provide an algorithm achieving $\rho \ge \Omega(\log \delta / \log\log \delta)$ for MWIS.

Our investigations of the best recoverable value achievable by our approaches also lead us to consider MIS in $k$-colored graphs (graphs for which a $k$-coloring is given). For this problem a $2/k$ approximation ratio is known~\cite{Hochbaum}, and we show that improving it would refute the unique games conjecture.

Our proofs in some cases provide simple alternatives to previously published related results.

\subsection{Related work}
\label{sec:relatedwork}

The strong inapproximability results for MIS~\cite{Hastad} justify searching other measures of performance guarantees, and the notion of recoverable value is one such candidate. It considers degrees of vertices, and hence it is instructive to recall known approximation algorithms for MIS and MWIS and how their performance depends on the degree sequence of the graph.

{\bf Random permutation.} Choosing a permutation uniformly at random and taking all vertices that appear prior to their neighbors in the order induced by the permutation produces an independent set whose expected weight is at least $\sum_{v \in V}{\frac{w_v}{d(v)+1}}$ (see \cite{AS}, for example). This guarantees a recoverable value of $\rho \ge 1$.

{\bf Greedy.} For MIS, iteratively picking a minimum degree vertex, adding it to an independent set $I$ and deleting the vertex and its neighbors from the graph is guaranteed to find an independent set of size at least $\sum_{v \in V}{\frac{1}{d(v)+1}}$ \cite{Gr,Wei}. Halldorsson and Radhakrishnan \cite{HR97} showed that this greedy algorithm produces an independent set of size at least $|V|\frac{1 + \alpha^2}{d_{avg} + 1 + \alpha}$ (where $\alpha$ denotes the fraction of vertices in the maximum independent set). For MWIS, {\em weighted greedy} that iteratively picks a vertex $v$ with minimum $w_v/(d_v + 1)$ is guaranteed to find an independent set of size at least $\sum_{v \in V}{\frac{w_v}{d(v)+1}}$~\cite{Sakai}.

{\bf LP.} An integer programming formulation of MWIS is:

maximize $\sum_{i \in V}w_i x_i$

subject to

$x_i + x_j \le 1$ for every edge $(i,j)$.

$x_i \in \{0,1\}$ for every vertex $i$.

Consider the LP relaxation of this program where each $x_i \in [0,1]$. A well known result due to Nemhauser and Trotter \cite{NT} asserts that there is an optimal solution for the relaxation such that for every $i,x_i \in \{0,\frac{1}{2},1\}$. Moreover, such an optimal solution can be found in polynomial time.

{\bf LP+greedy.} Consider the following algorithm. Find an optimal half integral solution to the LP, discard all the vertices assigned~0, keep all the vertices assigned~1, and run the greedy algorithm on the graph induced by all vertices that are assigned $1/2$. This algorithm was analyzed for connected graphs. Hochbaum \cite{Hochbaum} proved an approximation ratio of $\frac{2}{d_{avg}+1}$, and Halldorsson and Radhakrishnan \cite{HR97} (based on their improved analysis of the greedy algorithm) proved an approximation ratio of $\frac{5}{2d_{avg}+3}$.

{\bf SDP.}
Using semidefinite programming Halldorsson \cite{H} provided approximation ratios of $\Omega(\frac{\log d_{avg}}{d_{avg}\log \log d_{avg}})$ for MIS, and $\Omega(\frac{\log \delta}{\delta \log \log \delta})$ for MWIS on {\em $\delta$-inductive} (a.k.a. {\em $\delta$-degenerate}) graphs (in some permutation over the vertices $\delta$ is the maximum backward degree).

{\bf Local Search.}
In graphs of degree bounded by $\Delta$, algorithms based on local search were shown to achieve an approximation ratio of roughly $5/(\Delta+3)$, with some distinction between the cases of odd and even $\Delta$ (see~\cite{Berman,Chlebik}). We remark that in the special case of $\Delta$-regular graphs, the notion of recoverable value becomes equivalent to the traditional notion of approximation ratio, and our results are not as strong in this case as those achieved by local search. On the other hand, our algorithms are much faster than the local search algorithms.

In terms of hardness results, Austrin, Khot and Safra \cite{AKS} proved that approximating independent set in graphs of maximum degree $\Delta$ within a ratio larger than $\frac{(\log \Delta)^2}{\Delta}$ is unique games hard. Recall that we present hardness results for finding independent sets in graphs where a $k$-coloring is given. They essentially match the $2/k$ bounds achieved by known approximation algorithms~\cite{Hochbaum}. We are not aware of previous published hardness results for this problem, but there are some results for related problems on hypergraphs~\cite{GS}, and hardness results for MIS in graphs with bounded chromatic number but when no coloring is given~\cite{GSinop}.

\subsection{Our techniques}
\label{sec:techniques}

Vertices of very low degree receive special treatment. For MIS we show that a certain {\em 2-elimination} technique implies $\rho_2 = 1$. (See Section~\ref{sec:elimination} for subtleties involved in the statement of this result.)

Thereafter, following a recipe suggested in~\cite{PASS} (though there the problem considered was different, facility location), we present an algorithm based on a so called recoverable value LP (see Theorem~\ref{thm:MWIS}). This gives recoverable value of $\rho = 2$ for MWIS. For MIS we improve over this bound by using a new combination of some of the algorithms presented in Section~\ref{sec:relatedwork}. (A new combination is indeed needed, as we also provide examples showing that plain use of these algorithms does not even achieve $\rho$ bounded away from~1.)

Given a graph $G$, choose a random permutation $\pi$ on the vertices. We say that vertex $v$ is in {\em layer} $L_i$ with respect to $\pi$ if exactly $i-1$ of $v$'s neighbors appear before $v$ in $\pi$. For $k \ge 1$, let $G_k$ be the subgraph of $G$ induced on the vertices of the first $i$ layers. The random permutation algorithm referred to in Section~\ref{sec:relatedwork} simply returns $G_1$ which is an independent set. Our new algorithms will instead consider $G_k$ with small value of $k$ (depending on the context, we shall take $k \in \{2,3,\delta + 1\}$,  where $\delta$ denotes the minimum degree), and on $G_k$ run some algorithm from Section~\ref{sec:relatedwork}. The advantage of our approach (in the context of recoverable value) is that low degree vertices are more likely to end up in $G_k$. Moreover, $G_k$ (for small values of $k$) has special structure that makes finding large independent sets easier. For example, it turns out that $G_k$ is $k$-colorable. This suffices for obtaining a recoverable value of $\rho = 2$ (for MWIS on the original $G$), but does not guarantee anything better (by our new hardness of approximation results for approximating MIS in $k$-colored graphs). The fact that $G_k$ is $(k-1)$-inductive allows us to obtain a recoverable value of $\rho = \Omega(\log \delta/ \log\log \delta)$ (which improves over $\rho = 2$ when $\delta$ is large). Our unconditional improvement over $\rho = 2$ uses $G_3$ and applies only to MIS. For $G_3$ we consider its average degree, which is no longer a deterministic property but rather a function of several random variables (number of vertices in each layer). After showing that the $\frac{5}{2d_{avg}+3}$ approximation ratio of~\cite{HR97} extends to disconnected graphs (see Theorem~\ref{thm:isolated}), we show that the random variables can safely be replaced by their expectations (which are deterministic quantities), by this considerably simplifying the analysis. It was previously known that working with expectations often simplifies the analysis of randomized algorithms (through the use of linearity of the expectation), but the reason why it applies in our context is more delicate than usual, and may have applications also elsewhere. This leads to a canonical recoverable value $\rho \ge 15/7$. Thereafter we improve to $\rho \ge 7/3$ by designing a new approximation algorithm for MIS in graphs of small average degree.

\section{Handling low degree vertices}
\label{sec:elimination}

It is instructive to consider separately $\rho_d$ for $0 \le d \le 3$.

For isolated vertices, $\rho_0 = 1$ both for MIS and for MWIS, as they can be included in the optimal solution without changing the degree of any other vertex. Note an important point here. Our notion of  recoverable value treats every vertex of $I$ separately. Hence we can remove isolated vertices without effecting the remaining vertices in $I$. This might not have been so simple had we consider aggregate properties such as the average degree of vertices in $I$. Removing an isolated vertex increases the average degree for the remaining vertices.

Likewise, $\rho_1 = 1$. For MIS this follows because every vertex of degree~1 can be included in the independent set, and its neighbor removed. At most one of these two vertices is in any independent set. For other vertices in $I$, this process can only lower their degrees, and hence their recoverable value does not decrease (since $\rho_d$ is non-increasing). For MWIS the argument is slightly more delicate. Let $u$ be a vertex of degree~1 and let $v$ be its neighbor. If $w_u \ge w_v$ then the same argument as in the unweighted case works. However, if $w_u < w_v$ one should remove $u$ from $G$, and reduce the weight of $v$ to $w_v - w_u$, thus obtaining a new graph $G'$. Any independent set $I'$ in $G'$ can be extended to an independent set $I$ in $G$ whose value is higher by $w_v$. If $v \in I'$, simply increase the weight of $v$ back to its original value. If $v \not\in I'$, then include $u$ in $I$. As the transformation from $G$ to $G'$ only removed $u$ and did not increase the degree of any remaining vertex, monotonicity of $\rho_d$ implies that the recoverable of degree~1 vertices is~1.

The above arguments also imply that we can remove recursively vertices of degree at most~1 with no harm to the recoverable value. Hence we may assume that all graphs have minimum degree at least~2.

We show that for MIS essentially $\rho_2 = 1$ (see Theorem~\ref{thm:smalldegree} for an exact statement). Consider a graph $G$ of minimum degree~2. Let $u$ be a vertex of degree~2, and let $v$ and $w$ be its neighbors. We now describe a process that we call {\em 2-elimination}. 
If there is an edge $(v,w)$ we can safely add $u$ to the independent set (and remove vertices $v$ and $w$) because at most one of $u,v,w$ is in any independent set, and $u$ can replace any of $v$ and $w$ in an independent set. Hence it remains to deal with the case that $v$ and $w$ are not neighbors of each other. In this case, remove $u$ from $G$, merge $v$ and $w$ to become a new vertex $u'$ whose neighbors are the original neighbors of $v$ and $w$ (except for $u$ that was removed from the graph). This gives a new graph $G'$. Any independent set $I'$ in $G'$ can be extended to an independent set $I$ in $G$ whose size is larger by~1. If $u' \in I'$, replace it by $v$ and $w$. If $u' \not\in I'$, then include $u$ in $I$.

Now let us analyze how the recoverable value changes when transforming from $G$ to $G'$. We assume here canonical recoverable values. Without loss of generality, either $u \in I$ or both $v$ and $w$ are in $I$. In $G'$, we take an independent set $I'$ that is induced by $I$ in a natural way (if $u\in I$ than $u$ is simply lost, if $v,w \in I$ then $u' \in I'$). Note that $|I| = |I'| - 1$, hence we would like to show that their recoverable values differ by at most~1. If $u \in I$ then $I'$ differs from $I$ by the recoverable value of $u$ which is indeed at most~1. If $v,w \in I$, then the recoverable values of $G$ and $G'$ differ by

\begin{equation}
\label{eq:removed2}
\min[1,\frac{\rho}{d_v+1}] + \min[1,\frac{\rho}{d_w+1}] - \min[1,\frac{\rho}{d_{u'}+1}]
\end{equation}

Using the facts that $d_v,d_w \ge 2$ and $d_{u'} \le d_v + d_w - 2$ the value of (\ref{eq:removed2}) is at most~1 whenever $\rho \le 10/3$ (the worst choice of parameters being $d_v = d_w = 3$ and $d_{u'} = 4$). Hence if one considers canonical recoverable values, then if $\rho \ge 3$ this implies (by definition) that $\rho_2 = 1$, and if $\rho \le 10/3$ then 2-elimination achieves $\rho_2 = 1$ (without affecting $\rho$ for vertices of degrees larger than~2).

The above discussion establishes:

\begin{theorem}
\label{thm:smalldegree}
When using canonical recoverable values for MIS then regardless of the value of $\rho$ one may assume that $\rho_0 = \rho_1 = \rho_2 = 1$.
\end{theorem}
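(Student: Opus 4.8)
The plan is to dispose of degrees $0$, $1$, and $2$ one at a time. In each case I will exhibit a local reduction that deletes (or contracts) a low-degree vertex, recovers a value of exactly $1$ from it, and never raises the degree of any surviving vertex. Because the canonical $\rho_d=\min[1,\rho/(d+1)]$ is non-increasing in $d$, lowering degrees can only increase the value creditable to the rest of the solution, so nothing is lost elsewhere; and since each reduction strictly shrinks the graph, it may be iterated. Assuming $\rho_0=\rho_1=\rho_2=1$ then amounts to applying these reductions exhaustively before running any algorithm.

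For $\rho_0=1$, an isolated vertex belongs to every maximum independent set and its removal changes no other degree, so I strip all isolated vertices and credit $1$ to each. For $\rho_1=1$, given a degree-$1$ vertex $u$ with neighbor $v$, I put $u$ in the solution and delete both $u$ and $v$: any independent set contains at most one of $\{u,v\}$, so this is free, and deleting $v$ only lowers the degrees of its remaining neighbors. Monotonicity of $\rho_d$ then shows the value recoverable from the rest of $I$ is not diminished. Applying this recursively lets me assume minimum degree $2$.

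The real content is $\rho_2=1$, and the key move is to split on the value of $\rho$. If $\rho\ge 3$ there is nothing to do, since by definition $\rho_2=\min[1,\rho/3]=1$. Otherwise I run the $2$-elimination procedure on a degree-$2$ vertex $u$ with neighbors $v,w$: if $v,w$ are adjacent then $\{u,v,w\}$ form a triangle, so at most one lies in any independent set and I may take $u$ and delete all three; if $v,w$ are non-adjacent, I contract them into a single vertex $u'$ and delete $u$, obtaining $G'$, and any independent set of $G'$ lifts to one of $G$ that is larger by one. Since the two regimes $\rho\ge 3$ and $\rho\le 10/3$ overlap, together they cover every $\rho$.

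The single genuine estimate, and the main obstacle, is to bound the net change in recoverable value produced by the contraction, namely the quantity~(\ref{eq:removed2}), by $1$. Here I would use $d_v,d_w\ge 2$ together with $d_{u'}\le d_v+d_w-2$ (the merged vertex inherits $N(v)\cup N(w)$ minus $u$, and $u$ is counted in both $N(v)$ and $N(w)$), and then maximize over integer degrees subject to these constraints. The expectation is that the maximum occurs at $d_v=d_w=3$, $d_{u'}=4$, where~(\ref{eq:removed2}) reduces to $3\rho/10$; this is at most $1$ exactly when $\rho\le 10/3$, matching the threshold from the case split. The delicate point is confirming that this configuration is extremal: degrees of $4$ or more drive the expression strictly below $1$, while for $d_v=d_w=2$ the two $\min$ terms are capped at $1$ so the expression never exceeds $1$ there; thus the constraint binds precisely at $d_v=d_w=3$, which is what fixes the threshold $\rho\le 10/3$.
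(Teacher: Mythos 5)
Your proposal is correct and follows essentially the same route as the paper: isolated-vertex and degree-one removals, then 2-elimination with the triangle/merge dichotomy, the case split between $\rho \ge 3$ (where $\rho_2 = 1$ holds by definition) and $\rho \le 10/3$ (where the bound on~(\ref{eq:removed2}) under $d_v,d_w\ge 2$, $d_{u'}\le d_v+d_w-2$ is extremal at $d_v=d_w=3$, $d_{u'}=4$), and the observation that the two regimes overlap. One tiny imprecision: at $d_v=d_w=2$ the two positive terms are not ``capped at $1$'' when $\rho<3$; the right reason the expression stays below $1$ there is that $d_{u'}\le 2$ forces the subtracted term to be at least $\min[1,\rho/3]$, cancelling one of them.
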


Theorem~\ref{thm:smalldegree} cannot be extended to $\rho_3$. Moreover, there is some fixed $\epsilon > 0$ such that $\rho_3 \le 1 - \epsilon$. This follows from the fact that approximating MIS in 3-regular graphs in APX-hard~\cite{Berman}.

\section{Algorithms for MWIS}
\label{sec:MWIS}

In this section we assume that $G=(V,E)$ has been preprocessed to include all isolated vertices in the output independent set. This allows us to simplify notation from $\min[1,\rho/(d(v) + 1)]$ to $\rho/(d(v) + 1)$.

\begin{theorem}
\label{thm:MWIS} Let $G=(V,E)$ be a weighted graph without isolated vertices. There is a polynomial time algorithm for MWIS achieving a recoverable value of $\rho = 2$. Namely, the output of the algorithm is an independent set of weight at least $\sum_{v \in I}\frac{2 w_v}{d(v)+1}$.
\end{theorem}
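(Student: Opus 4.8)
The plan is to instantiate the layered construction of Section~\ref{sec:techniques} with $k=2$. Draw a permutation $\pi$ on $V$ uniformly at random, form the first two layers $L_1,L_2$, and let $G_2=G[L_1\cup L_2]$ be the induced subgraph on them. The entire algorithm is: compute an \emph{exact} maximum weight independent set of $G_2$ and output it. The whole point is that $G_2$ is structurally trivial while still capturing, in expectation, the right fraction of weight from every candidate independent set.

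Two facts drive the proof. First, $G_2$ admits an exact polynomial-time MWIS solver: a vertex in layer $i$ has exactly $i-1$ neighbors preceding it in $\pi$ within all of $G$, so every vertex of $L_1\cup L_2$ has at most one earlier neighbor, and ordering $G_2$ by $\pi$ exhibits it as $1$-inductive. Hence $G_2$ is a forest (in particular bipartite), and MWIS on it is solvable optimally (e.g.\ by dynamic programming on the forest, or via the K\"onig/min-cut route for bipartite graphs). Second, the per-vertex probability is exactly the factor we need: for fixed $v$, the relative order of $v$ and its $d(v)$ neighbors under $\pi$ is a uniformly random ordering of $d(v)+1$ elements, and $v\in L_i$ precisely when $v$ has rank $i$ in that suborder, so $\Pr[v\in L_i]=1/(d(v)+1)$ for each $i$ and thus $\Pr[v\in L_1\cup L_2]=2/(d(v)+1)$. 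Since $G$ has no isolated vertices this is well defined and never exceeds~$1$.

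The crux is the comparison against an arbitrary independent set $I$. Condition on $\pi$. Because $G_2$ is an induced subgraph and $I$ is independent in $G$, the set $I\cap(L_1\cup L_2)$ is an independent set of $G_2$; since we return an exact maximum weight independent set of $G_2$, the output weight $W$ satisfies $W\ge w\big(I\cap(L_1\cup L_2)\big)$. Taking expectation over $\pi$ and applying linearity together with the probability above yields
\[
\mathbb{E}[W]\;\ge\;\mathbb{E}\Big[\sum_{v\in I} w_v\,\mathbf{1}[v\in L_1\cup L_2]\Big]\;=\;\sum_{v\in I}\frac{2w_v}{d(v)+1}.
\]
Choosing $I$ to be the maximizer of the right-hand side gives the claimed recoverable value $\rho=2$.

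Neither ingredient is a genuine obstacle, but two points deserve care. The forest structure of $G_2$ is exactly what lets us solve it \emph{optimally}: an approximate solver would lose a constant and destroy the bound, so it is essential that $k=2$ makes $G_2$ tractable rather than merely sparse. The other point is that linearity of expectation sidesteps the correlations among the events $\{v\in G_2\}$ for distinct $v\in I$, so no independence analysis is required; the guarantee is on the \emph{expected} weight, which is the relevant quantity for randomized algorithms in this framework. If a deterministic algorithm matching the statement is desired, one derandomizes following the PASS recoverable-value LP recipe (or by conditional expectations, using the exactly computable MWIS value of $G_2$ as the governing quantity).
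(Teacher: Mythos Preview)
Your proof is correct and coincides with the paper's ``fast randomized algorithm'': take $G_2$ under a random permutation, observe it is a forest (1-inductive), solve MWIS exactly on it, and use linearity of expectation on $\Pr[v\in G_2]=2/(d(v)+1)$. The paper additionally gives a deterministic LP-based proof (the recoverable-value LP with Nemhauser--Trotter half-integrality, then weighted greedy on the $1/2$-vertices), which is precisely the ``PASS recoverable-value LP recipe'' you mention for derandomization.
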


\begin{proof}
We present two different polynomial time algorithms that achieve the desired bounds. One is based on linear programming. The other is much faster, but randomized.

\noindent {\bf LP algorithm.} Consider the following recoverable value LP (the RV LP). $x_i$ is
a variable that indicates whether vertex $i$ is in the independent
set.

maximize $\sum_{i \in V} \frac{w_i}{d(i) + 1} x_i$

subject to

$x_i + x_j \le 1$ for every edge $(i,j)$.

$0 \le x_i \le 1$ for every vertex $i$.

The indicator vector of any independent set $I$ is a feasible solution
to the recoverable value LP. Moreover, the value of the LP then would be the recoverable value with respect to $I$ (up to a scaling factor of $\rho$). Hence the optimal value of the LP is
at least the desired recoverable value with respect to best independent set $I$ (scaled by $1/\rho$). Treating $\frac{w_i}{d(i) + 1}$ as a weight of vertex $i$, the results of~\cite{NT} imply that this LP has a half-integral
optimal solution, and moreover that such a solution can be found in polynomial time.

Include in the independent set all vertices with $x_i = 1$ (getting credit $w_i$ which is at least twice the credit $w_i/(d(i) + 1)$ that the LP got for them), and discard all
vertices of $x_i = 0$ (the LP got no credit for them).
Let $G_{1/2}$ be the weighted graph induced on all vertices assigned $1/2$. Running weighted greedy on this graph ensures a solution of value $\sum_{i\in V_{1/2}} w_i/(d_i + 1)$, whereas the LP (by having $x_i = 1/2$) got only half this credit. Hence after the rounding we obtain an integral solution of value at least twice that of the RV LP, implying $\rho \ge 2$.

\noindent {\bf A fast randomized algorithm.}
Choose a random permutation and consider $G_2$ as in Section~\ref{sec:techniques}. This graph is a forest (can be verified by orienting edges towards earlier vertices in the permutation). As every vertex $v$ within $I$ belongs to $G_2$ with probability $\frac{2}{d(v)+1}$, the expected weight of an independent set within $G_2$ is at least $\sum_{v \in I}\frac{2w_v}{d(v)+1}$. MWIS can be found in forests in linear time, proving Theorem~\ref{thm:MWIS}.
\end{proof}

The bounds $\sum_{v \in V}\frac{w_v}{d(v)+1}$ (achieved by the random permutation algorithm of Section~\ref{sec:relatedwork}) and $\max_{I} \sum_{v\in I} 2w_v/(d(v)+1)$ of Theorem~\ref{thm:MWIS} are incomparable. Nevertheless, it is not hard to see that both algorithms of Theorem~\ref{thm:MWIS} ensure not only the bound $\max_{I} \sum_{v\in I} 2w_v/(d(v)+1)$ claimed in the statement of the theorem, but also the bound $\sum_{v \in V}\frac{w_v}{d(v)+1}$. For the LP algorithm this is attained by the feasible solution that assigns $1/2$ to all variables. For the fast randomized algorithm this follows because returning the first layer is a legitimate output for it.

We now improve the recoverable value in the special case when the minimum degree $\delta$ in the input graph is sufficiently high.

\begin{theorem}
\label{thm:delta} Let $G=(V,E)$ be a weighted graph with minimum degree $\delta$. There is a polynomial time algorithm for MWIS achieving a recoverable value of $\rho = \Omega(\log \delta/ \log\log \delta)$.
\end{theorem}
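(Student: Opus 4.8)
The plan is to combine the layering technique from Section~\ref{sec:techniques} with Halldorsson's SDP-based algorithm for $\delta$-inductive graphs mentioned in Section~\ref{sec:relatedwork}. The key structural observation is that if we choose a random permutation $\pi$ and take $G_k$ (the subgraph induced on the first $k$ layers) for $k = \delta + 1$, then $G_k$ is $(k-1)$-inductive: orienting each edge towards the earlier endpoint in $\pi$, every vertex in layer $L_i$ has at most $i-1 \le k-1$ backward neighbors within $G_k$, so ordering the vertices of $G_k$ by $\pi$ exhibits it as a $(k-1)=\delta$-inductive graph. Applying Halldorsson's MWIS algorithm~\cite{H} to $G_k$ then yields an independent set whose weight is within a factor $\Omega(\log\delta/(\delta\log\log\delta))$ of the maximum weight independent set \emph{in} $G_k$.

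First I would lower-bound the weight we can extract. Fix the independent set $I$ maximizing the recoverable-value expression, and let $I_k = I \cap V(G_k)$ be its restriction to the first $k$ layers. Since $I_k$ is an independent set in $G_k$, the SDP algorithm returns weight at least $\Omega(\log\delta/(\delta\log\log\delta)) \cdot w(I_k)$. The crucial point is now to control $\mathbb{E}[w(I_k)]$ using the layering probabilities: a vertex $v \in I$ lands in one of the first $k = \delta+1$ layers precisely when at most $\delta$ of its neighbors precede it in $\pi$, i.e. when $v$ is \emph{not} among the last $d(v)+1-k$ in the random order restricted to $v \cup N(v)$. Since $d(v) \ge \delta$, the probability that $v$ appears in one of the first $k = \delta+1$ positions among its $d(v)+1$ closed-neighborhood elements is $(\delta+1)/(d(v)+1)$. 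Hence $\mathbb{E}[w(I_k)] = \sum_{v \in I} w_v \cdot \frac{\delta+1}{d(v)+1}$.

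Putting these together, the expected weight of the output is at least
\begin{equation}
\label{eq:deltacombine}
\Omega\!\left(\frac{\log\delta}{\delta\log\log\delta}\right) \sum_{v \in I} w_v\,\frac{\delta+1}{d(v)+1}
= \Omega\!\left(\frac{\log\delta}{\log\log\delta}\right)\sum_{v\in I}\frac{w_v}{d(v)+1},
\end{equation}
where the factor $\delta+1$ from the layering cancels the $\delta$ in the denominator of Halldorsson's guarantee, leaving exactly the claimed recoverable value $\rho = \Omega(\log\delta/\log\log\delta)$.

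The main obstacle I anticipate is the interchange of expectation and the SDP guarantee: the algorithm's output weight is bounded below by $\Omega(\log\delta/(\delta\log\log\delta))\,w(I_k)$ for \emph{each} realization of $\pi$, and $I_k$ is itself a random set depending on $\pi$. This is fine because the approximation factor is a deterministic constant independent of $\pi$, so linearity of expectation applies directly to pull the expectation inside and produce $\mathbb{E}[w(I_k)]$; I would state this carefully to avoid the subtlety flagged in Section~\ref{sec:techniques} about replacing random variables by expectations. A secondary point to verify is that $G_k$ is genuinely $(k-1)$-inductive in the precise sense Halldorsson requires (maximum backward degree in \emph{some} vertex ordering), which the permutation order supplies immediately; and that the per-realization lower bound already accounts for the correct maximum weight independent set in $G_k$ being at least $w(I_k)$, since $I_k$ is a feasible independent set there.
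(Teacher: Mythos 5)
Your proposal is correct and follows essentially the same route as the paper's own proof: pass to $G_{\delta+1}$ under a random permutation, note it is $\delta$-inductive via the permutation ordering, compute that each $v\in I$ survives with probability $(\delta+1)/(d(v)+1)$ (using $d(v)\ge\delta$), and run Halldorsson's SDP algorithm so the $\delta+1$ cancels the $\delta$ in its approximation guarantee. You spell out the expectation interchange and the inductiveness check more explicitly than the paper does, but the argument is the same.
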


\begin{proof}
Choose a random permutation and consider $G_{\delta+1}$ as defined in Section~\ref{sec:techniques}. This graph is a $\delta$-inductive (orienting edges towards earlier vertices in the permutation, no vertex has outdegree larger than $\delta$). As every vertex $v$ within $I$ belongs to $G_{\delta + 1}$ with probability $\frac{\delta + 1}{d(v)+1}$ (here we used the assumption that $d(v) \ge \delta$), the expected weight of an independent set within $G_{\delta + 1}$ is at least $\sum_{v \in I}\frac{w_v(\delta + 1)}{d(v)+1}$. Run the SDP algorithm from~\cite{H} with approximation ratio $\Omega(\frac{\log \delta}{\delta \log \log \delta})$ (on $\delta$-inductive graphs) to obtain a recoverable value with $\rho = \Omega(\log \delta/ \log\log \delta)$.
\end{proof}

\section{Algorithms for MIS}
\label{sec:MIS}

Here we show that for unweighted graphs there are randomized algorithms achieving a recoverable value of $\rho$ strictly larger
than~2. (The value of $\rho$ will be further improved in Section~\ref{sec:improve}, by extending the principles developed in the current section.) As in Section~\ref{thm:MWIS}, we wish to simplify notation from $\min[1,\rho/(d(v) + 1)]$ to $\rho/(d(v) + 1)$.
This requires that the input graph $G=(V,E)$ has no vertices of degree less than~2. This can be assumed without loss of generality (see Section~\ref{sec:elimination}).

The basic idea of our algorithm is as follows. Pick a random permutation over the vertices and consider the graph $G_3$ induced on layers $L_1 \cup L_2 \cup L_3$. The expected size of $I\cap G_3$ is now $\sum_{v \in I}\frac{3}{d(v)+1}$ (Here we use the fact that the minimum degree in $G$ is 2). One would expect all three layers $L_1$,$L_2$ and $L_3$ to be of equal size. Moreover, every vertex in $L_1$ contributes no edge to $G_3$, every vertex in $L_2$ contributes at most one edge to $G_3$, and every vertex in $L_3$ contributes at most two edges to $G_3$.
Hence one would expect the average degree of $G_3$ to be at most $2$. Recall that the algorithm of \cite{HR97} (LP+greedy) obtains an approximation ratio of $\frac{5}{2d_{avg} + 3}$ (on connected graphs). Hence, applying this algorithm to $G_3$ (and using the bounds of~\cite{HR97} even though $G_3$ need not be connected), one may hope to attain recoverable value of $\frac{15}{7}\sum_{v \in I}\frac{1}{d(v)+1}$.

Summarizing, we have algorithm PLG (Permute, LP, Greedy).

\begin{enumerate}

\item
Pick a random permutation over the vertices and consider the graph $G_3$ induced on layers $L_1 \cup L_2 \cup L_3$.

\item
Find a half-integral solution to the LP for MIS on $G_3$. Put the vertices with value~1 in the independent set and remove the vertices with value~0.

\item Run the greedy algorithm on the subgraph induced on the vertices that remain from $G_3$.

\end{enumerate}

The above argument of why PLG achieves a value of $\rho = 15/7$ had two gaps in it. One relates to the assumption that $G_3$ is connected (which might not hold), and the other to the assumption that sizes of layers are equal to their expectations. We shall deal with each one of them separately.

As noted, the approximation ratio of $\frac{5}{2d_{avg} + 3}$ of the algorithm of~\cite{HR97} assumes that the graph is connected. Indeed, trying to use this expression with $d_{avg} < 1$ (which may well be the case for graphs with isolated vertices), one obtains approximation ratios better than~1 which of course cannot be true. Nevertheless, when the average degree is at least~2, the following theorem shows that the bound of $\frac{5}{2d_{avg} + 3}$ does apply, regardless of whether the graph is connected or not. Observe that all connected graphs are captured by the theorem (either they are trees and then greedy solves them optimally, or their average degree is at least~2), and hence our proof can also replace the one given in~\cite{HR97} for connected graphs.

\begin{theorem}
\label{thm:isolated} If $G$ is a graph of average degree at least~$2$ then MIS can be approximated within ratio of $\frac{5}{2d_{avg} + 3}$.
\end{theorem}

\begin{proof}
Recall that by the results of~\cite{HR97}, the approximation ratio of the greedy algorithm on graphs with average degree $d_{avg}$ and independence ratio $\alpha$ is $f(\alpha)=\frac{1+\alpha^2}{(d_{avg}+1+\alpha)\alpha}$. This ratio as a function of $\alpha$ is decreasing in the range $(0,1/2]$. For $\alpha = 1/2$ we obtain the desired approximation ratio of $\frac{5}{2d_{avg} + 3}$. Hence to prove Theorem~\ref{thm:isolated} it suffices to show that we can assume that $\alpha(G) \le 1/2$.

Consider an optimal half integral solution to the standard LP relaxation of the MIS problem (as implied by~\cite{NT}). Let $ONE$ denote the set of variables receiving 1, $ZERO$ the set of variables receiving 0, and $HALF$ the set of variables receiving $1/2$. Let $H$ be the subgraph induced on the vertices whose corresponding variables are in $HALF$. The independence ratio of $H$ is at most $1/2$, as desired. Necessarily $|ONE| \ge |ZERO|$ (otherwise they would both be in $HALF$). Remove $ONE$ and $ZERO$ and all edges connected to them (and thus only $H$ is left), and add instead $|ONE|$ isolated edges (one edge for each vertex of $ONE$), thus obtaining a new graph $G'$. Observe that the size of the maximum independent set does not change, but $\alpha(G') \le 1/2$ as desired. It remains to analyze the average degree of $G'$. In the removal phase $|ONE| + |ZERO|$ vertices and at least $|ZERO|$ edges are removed (every vertex in $ZERO$ has an edge to $ONE$, otherwise it could be moved to $HALF$). Thereafter $2|ONE|$ vertices and $|ONE|$ edges are added. Hence altogether exactly $|ONE| - |ZERO|$ vertices and at most $|ONE| - |ZERO|$ edges are added. If the average degree of $G$ is at least~2, the average degree of $G'$ cannot be larger than that of $G$.

Summarizing, we have shown how to transform $G$ into a new graph $G'$ for which $\alpha(G') \le 1/2$, the average degree of $G'$ is at most that of $G$, and the maximum independent sets in $G$ and $G'$ have the same size. Applying greedy to $G'$ is the same as taking $ONE$ into the solution and applying greedy only on $H$, which is precisely the algorithm of~\cite{HR97}. By the properties of $G'$, the approximation ratio is at least as desired by Theorem~\ref{thm:isolated}.
\end{proof}

Apparently, the result of Theorem~\ref{thm:isolated} can be extended also
to $d_{avg} \ge 3/2$ (Halldorsson, private communication), though this is
not needed in our paper.

Now let us deal with the issue of expectations. Let us first explain the problem. We have various information about expectations. Namely, $E[|I \cap V(G_3)|] = \sum_{v\in I} 3/(d(v) + 1)$, and $E[|L_1|] = E[|L_2|] = E[|L_3|]$. Moreover, the number of edges in $G_3$ is at most $2|L_3| + |L_1|$. If things behave exactly as expectation the average degree of $G_3$ is at most~2, Theorem~\ref{thm:isolated} applies and we get an approximation ratio of $\frac{5}{7}\sum_{v\in I} 3/(d(v) + 1)$ as desired. However, there is also variability in the above random variables, and it can be quite large. (The complete bipartite graph $K_{3,d}$ illustrates this variability.) With some probability the average degree in $G_3$ may be larger than~2, and with some probability smaller. Hence the approximation ratio on $G_3$ is a random variable. Moreover, when the average degree of $G_3$ is too small, the bounds of Theorem~\ref{thm:isolated} no longer hold. Moreover, the size of the maximum independent set in $G_3$ might be correlated with the average degree in complicated ways.

We present here a very simple way to handle all the above complications.

\begin{theorem}
\label{thm:expectation}
The expected size the independent set found by algorithm PLG is at least $\frac{15}{7}\sum_{v \in I}\frac{1}{d(v)+1}$, where $I$ is any independent set in $G$.
\end{theorem}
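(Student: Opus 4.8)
The plan is to prove this by a clever use of conditioning and linearity of expectation that sidesteps the need to control the joint distribution of the layer sizes and the average degree of $G_3$. The key difficulty, as the text emphasizes, is that the approximation ratio $\frac{5}{2d_{avg}+3}$ of Theorem~\ref{thm:isolated} is a nonlinear (convex) function of the realized average degree of $G_3$, and both the average degree and the size of the optimal independent set within $G_3$ are random variables that may be correlated in complicated ways. So one cannot simply plug expected quantities into the ratio. The main obstacle is precisely this nonlinearity combined with correlation.

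The approach I would take is the following. First I would fix a particular target independent set $I$ in $G$ (the one maximizing the recoverable value), and note that what PLG ultimately returns is compared against $I \cap V(G_3)$, which is itself an independent set inside $G_3$. The crucial observation is that $I \cap V(G_3)$ is an independent set in $G_3$, so $|I_{\max}(G_3)| \ge |I \cap V(G_3)|$. I would then try to argue that the output of PLG on $G_3$ has size at least $\frac{5}{2 \cdot 2 + 3} = \frac{5}{7}$ times the size of the optimal independent set \emph{in expectation}, by bounding the number of edges and vertices of $G_3$ by their expectations rather than working with the random ratio directly. Concretely, I would sum the guarantee of Theorem~\ref{thm:isolated} written in the additive (edge-counting) form rather than the multiplicative-ratio form: the greedy-plus-LP algorithm of~\cite{HR97} produces an independent set whose size admits a lower bound that is linear in the number of vertices and edges, and it is this linearity that rescues us.

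The technical heart is to reformulate Theorem~\ref{thm:isolated} so that the bound reads like ``output size $\ge a \cdot (\text{optimum in } G_3) - b \cdot (\text{edges of } G_3)$'' or an equivalent affine expression in which all random quantities enter linearly; then one applies $E[\cdot]$ term by term. I would use $E[|I\cap V(G_3)|] = \sum_{v\in I} 3/(d(v)+1)$ as the lower bound on the expected optimum, use $E[|L_1|]=E[|L_2|]=E[|L_3|] = \frac{1}{3}E[|V(G_3)|]$, and use the bound that the number of edges in $G_3$ is at most $2|L_3| + |L_2|$ (each vertex in $L_i$ contributes at most $i-1$ backward edges), so $E[\#\text{edges}(G_3)] \le 2E[|L_3|] + E[|L_2|] = E[|V(G_3)|]$, matching average degree exactly $2$ in expectation. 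Feeding these expectations into the affine form of the $\frac{5}{2d_{avg}+3}$ guarantee at the operating point $d_{avg}=2$ yields a final independent set of expected size at least $\frac{5}{7} \cdot \sum_{v\in I} 3/(d(v)+1) = \frac{15}{7}\sum_{v\in I}\frac{1}{d(v)+1}$.

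The step I expect to be the genuine obstacle is justifying that the $\frac{5}{2d_{avg}+3}$ guarantee can legitimately be linearized and applied at $d_{avg}=2$ despite the realized average degree of $G_3$ fluctuating above and below $2$. Naively, convexity of the ratio in $d_{avg}$ cuts the wrong way, and when the realized average degree drops below $2$ Theorem~\ref{thm:isolated} is not even stated to hold. The delicate point the authors flag (``the reason why it applies in our context is more delicate than usual'') is most likely resolved by observing that the correct monotone quantity to track is not the ratio but an absolute lower bound on the size of the independent set found, expressed directly via the function $f(\alpha)$ and the vertex/edge counts, which is concave (hence $E[f(\cdot)] \le f(E[\cdot])$ runs the favorable direction) or else by a pairing/padding argument analogous to the one in the proof of Theorem~\ref{thm:isolated} that reduces every realization to the worst case $\alpha = 1/2$ and $d_{avg}=2$ simultaneously. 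I would therefore devote the main effort to writing the performance of PLG on $G_3$ as an expectation of a quantity that is affine (or suitably concave) in the random layer sizes, so that substituting expectations is provably valid rather than heuristic.
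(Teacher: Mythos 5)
You have correctly isolated the crux --- the $\frac{5}{2d_{avg}+3}$ guarantee is a nonlinear function of correlated random variables, and it is not even valid when the realized average degree of $G_3$ drops below~$2$ --- but your proposal does not close this gap; it only names two candidate ways it ``is most likely resolved.'' Neither candidate is carried out, and neither works as stated. Written in absolute terms, the bound behind Theorem~\ref{thm:isolated} has the form $(n_3^2+\mathrm{opt}_3^2)/(2m_3+n_3+\mathrm{opt}_3)$ in the vertex count, edge count and optimum of $G_3$ (and only after a realization-dependent Nemhauser--Trotter transformation that forces $\alpha\le 1/2$); this is a genuine ratio, not an affine expression, so there is no term-by-term application of $E[\cdot]$ available. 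Your Jensen fallback also points the wrong way: if the output is bounded below by $f(\text{random quantities})$ and you want to deduce a lower bound of $f(\text{expectations})$, you need $E[f(X)]\ge f(E[X])$, i.e.\ \emph{convexity}; you invoke concavity, which gives $E[f(X)]\le f(E[X])$ and is useless for a lower bound. Even granting convexity of the quadratic-over-linear form, the substitution of expectations fails unless one also controls $\mathrm{opt}(G_3)/|V(G_3)|$ pointwise, which is exactly the part the NT preprocessing handles in a realization-dependent way. So as written the proof has a hole precisely at the step you flagged.

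The paper's resolution is different: it is a replication (tensoring) argument, closer in spirit to the ``padding'' you mention in passing but not at the level of a single realization. As a thought experiment one runs PLG on $N$ disjoint copies of $G$. Two facts make this legitimate: Theorem~\ref{thm:isolated} was proved for disconnected graphs, and PLG on a disjoint union is distributed identically to independent runs of PLG on the components. On the $N$-fold copy, $|L_1|,|L_2|,|L_3|$ and $|I_N\cap G_3|$ are each sums of $N$ independent bounded variables, hence concentrated within relative error $o(\epsilon)$ of their expectations for $N$ large; the ratio bound of Theorem~\ref{thm:isolated} is then applied to these essentially deterministic realized quantities, yielding $(1-\epsilon)\frac{15}{7}\sum_{v\in I}\frac{1}{d(v)+1}$ per copy in expectation, and letting $\epsilon\to 0$ finishes the proof. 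If you wish to salvage your route you must actually exhibit a convex (or linear) surrogate lower bound valid for every realization, including those with average degree below~$2$; the paper deliberately avoids having to do so.
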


\begin{proof}
Our proof uses two important facts. One is that Theorem~\ref{thm:isolated} applies also to disconnected graphs. The other (which the reader may verify) is that algorithm PLG when run on a disconnected graph gives the same outcome (or more formally, the same probability distribution on outcomes) as that when PLG is run on each connected component separately.

Let $\epsilon > 0$ be any desired level of accuracy with which we want to estimate the performance guarantee of PLG. Let $G$ be an arbitrary graph on which we run PLG, let $n$ be the number of its vertices, and let $I$ be an independent set in $G$. As a thought experiment, make $N$ disjoint copies of $G$, where $N$ is chosen to be sufficiently large as a function of $\epsilon$ and $n$. Call the resulting graph on $nN$ vertices $G_N$, and let $I_N$ be the independent set composed from the $N$ copies of $I$. Run PLG on $G_N$. With respect to PLG, the random variables $L_1$, $L_2$, $L_3$, and $I_N \cap G_3$ are all concentrated around their expectation within relative errors that are $o(\epsilon)$ (by our large choice of $N$ and the fact that these random variables are each a sum of $N$ bounded and independent random variables, one for each copy of $G$). Hence on $G_N$, with probability that can be made $(1 - \epsilon/2)$ PLG finds an independent set of size at least $(1 - \epsilon/2)\frac{15}{7}\sum_{v \in I_N}\frac{1}{d(v)+1}$. Hence in expectation, per copy of $G$, the size of the independent set found is at least $(1 - \epsilon)\frac{15}{7}\sum_{v \in I_N}\frac{1}{d(v)+1}$. Letting $\epsilon$ tend to~0 Theorem~\ref{thm:expectation} is proved.
\end{proof}

\section{Improved recoverable values for MIS}
\label{sec:improve}

The main point of Theorem~\ref{thm:expectation} is that algorithm PLG has a recoverable value with $\rho$ strictly larger than~2. This value is at least $15/7$ (as the theorem shows) and at most~3 (for example, on the complete bipartite graph $K_{3,n-3}$), and it would be interesting to narrow this gap. In this section we use the notion of 2-elimination from Section~\ref{sec:elimination} do design variations on algorithm PLG for which we prove a value of $\rho$ strictly larger than $15/7$.

Section~\ref{sec:elimination} shows that for $\rho \le 10/3$, by repeatedly applying 2-elimination as long as possible, we may assume that $G$ has minimum degree~3. This allows us to replace $G_3$ by $G_4$ in algorithm PLG, capturing a higher fraction of vertices of $I$. (In contrast, if there are vertices of degree~2 in $I$, their fraction in $G_3$ is already~1 and cannot increase in $G_4$.) This gain is offset to some extent by the higher average degree of $G_4$ compared to $G_3$, which gives a poorer guarantee in the approximation ratio $\frac{5}{2d_{avg}+3}$ of Theorem~\ref{thm:isolated}. Still, we get a recoverable value of $\rho \ge 4\frac{5}{2 \cdot 3 + 3} = 20/9$. More specifically, for the original $I$ in $G$, vertices of degree at most~2 contribute~1 to the recoverable value, and vertices of degree at least~3 contribute at least $20/9(d_v + 1)$.

\begin{corollary}
\label{cor:20/9}
A canonical recoverable value $\rho = 20/9$ is achievable for MIS.
\end{corollary}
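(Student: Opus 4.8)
The plan is to combine the 2-elimination technique of Section~\ref{sec:elimination} with a modified version of algorithm PLG that operates on $G_4$ rather than $G_3$. First I would invoke Theorem~\ref{thm:smalldegree}: since we are aiming for $\rho = 20/9 < 10/3$, we may repeatedly apply 2-elimination as long as any vertex of degree~2 remains, and this preserves the recoverable value (each 2-elimination step loses at most~1 in recoverable value while gaining exactly~1 vertex in the independent set, as established in the analysis of equation~(\ref{eq:removed2})). After this preprocessing we obtain a graph of minimum degree at least~3. On this graph, every vertex $v \in I$ now satisfies $d(v) \ge 3$, so the probability that $v$ lands in the first four layers $G_4$ is $\frac{4}{d(v)+1}$, which captures a strictly larger fraction of each surviving vertex than the factor $\frac{3}{d(v)+1}$ available from $G_3$.

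The next step is to bound the average degree of $G_4$. The layer-counting argument generalizes directly: a vertex in layer $L_i$ contributes at most $i-1$ backward edges to the induced subgraph, so the number of edges in $G_4$ is at most $|L_1|\cdot 0 + |L_2|\cdot 1 + |L_3|\cdot 2 + |L_4|\cdot 3$. Since in expectation the four layers are of equal size, the expected average degree of $G_4$ is at most $\frac{2(0+1+2+3)}{4} = 3$. Running the LP+greedy routine of~\cite{HR97} (via Theorem~\ref{thm:isolated}) on a graph of average degree~$3$ yields approximation ratio $\frac{5}{2\cdot 3 + 3} = \frac{5}{9}$. Multiplying the capture probability by the approximation ratio gives a recoverable value of $4 \cdot \frac{5}{9} = \frac{20}{9}$ per vertex of $I$, i.e. the algorithm returns an independent set of expected size at least $\frac{20}{9}\sum_{v\in I}\frac{1}{d(v)+1}$ on the preprocessed graph, and the 2-elimination accounting extends this back to the original $G$ with vertices of degree at most~2 contributing their full value~1.

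The main obstacle, just as in Theorem~\ref{thm:expectation}, is that the average degree of $G_4$ is a random variable rather than the deterministic value~3: the layer sizes fluctuate, and when the realized average degree drops below~2 the guarantee of Theorem~\ref{thm:isolated} no longer applies, while the size of the maximum independent set in $G_4$ may correlate with the average degree in uncontrolled ways. I would dispose of this exactly as in the proof of Theorem~\ref{thm:expectation}: pass to $N$ disjoint copies $G_N$ of the preprocessed graph, observe that the modified PLG decomposes across connected components, and use concentration of the layer sizes and of $|I_N \cap G_4|$ around their expectations (each being a sum of $N$ bounded independent contributions) together with the disconnected version of Theorem~\ref{thm:isolated} to show that the per-copy expected output is within a $(1-\epsilon)$ factor of $\frac{20}{9}\sum_{v\in I}\frac{1}{d(v)+1}$. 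Letting $\epsilon \to 0$ and combining with the 2-elimination bookkeeping then establishes that $\rho = 20/9$ is achievable, which is the claim of Corollary~\ref{cor:20/9}.
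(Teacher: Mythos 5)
Your proposal is correct and follows essentially the same route as the paper: 2-elimination to force minimum degree~3, replacing $G_3$ by $G_4$ in PLG so each vertex of $I$ is captured with probability $\frac{4}{d(v)+1}$, bounding the expected average degree of $G_4$ by~3 via the layer-counting argument, applying Theorem~\ref{thm:isolated} to get the ratio $\frac{5}{9}$, and reusing the $N$-copies concentration argument of Theorem~\ref{thm:expectation} to justify working with expectations. The only difference is that you make explicit the expectation-handling and the degree-$\le 2$ bookkeeping, which the paper leaves implicit in the paragraph preceding the corollary.
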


We can further improve the recoverable value by improving over the performance guarantees of Theorem~\ref{thm:isolated} (which are based on those of~\cite{HR97}) for graphs of a given average degree. We use 2-elimination to do so for the special case of $d_{avg} = 2$, which captures the graph $G_3$.

\begin{theorem}
\label{thm:7/9}
On graphs with average degree~2 (or less), MIS can be approximated within a ratio of $7/9$.
\end{theorem}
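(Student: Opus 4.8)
The plan is to combine the exact reductions of Section~\ref{sec:elimination} with Theorem~\ref{thm:isolated}, using a single potential, the \emph{excess} $X(G)=|E|-|V|$, to bridge between them. First I would run the following preprocessing on the input $G$ (which has $X(G)\le 0$ since $d_{avg}\le 2$): repeatedly remove vertices of degree~$0$ and degree~$1$ (adding the degree-$1$ vertex to the solution and deleting its neighbour) and apply 2-elimination to vertices of degree~$2$, until the surviving graph $G''$ has minimum degree at least~$3$ (or is empty, in which case the algorithm is exact and we are done). Each such operation is exact: it lowers $|I_{max}|$ by exactly~$1$ and contributes exactly~$1$ recovered vertex. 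Let $k_1$ be the number of operations performed, so $|I_{max}(G)|=k_1+s$ with $s=|I_{max}(G'')|$, and let the algorithm finally run the procedure of Theorem~\ref{thm:isolated} on $G''$, recovering at least $r''s$ with $r''=5/(2d''+3)$ and $d''=2|E''|/|V''|\ge 3$; this is legitimate since $d''\ge 3\ge 2$ and Theorem~\ref{thm:isolated} allows disconnected graphs.

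The heart of the argument is two quantitative facts about $G''$. The first tracks the excess: a direct case check shows that among all the reduction steps, only the removal of a $K_1$ or $K_2$ component raises $X$ (by exactly~$1$); degree-$1$ removal with a neighbour of degree $\ge 2$ gives $\Delta X = 2-d_v\le 0$, 2-elimination of type (a) gives $\Delta X=4-d_v-d_w\le 0$, and type (b) gives $\Delta X=-t\le 0$ where $t$ counts merged common neighbours. Hence, writing $p$ for the number of $K_1/K_2$ removals, $X(G'')\le X(G)+p\le p\le k_1$. The second is a bound on $s$ coming from minimum degree~$3$: every vertex of a maximum independent set $J$ of $G''$ sends at least three edges, all into $V''\setminus J$ and all distinct, so $|E''|\ge 3s$, i.e.\ $s\le |E''|/3$. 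Writing $m=|E''|$, $n=|V''|$ and $X''=m-n$, minimum degree~$3$ also gives $m\ge \frac32 n$.

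Combining these, the overall ratio is $\frac{k_1+r''s}{k_1+s}$, which is increasing in $k_1$ (as $r''<1$); since $k_1\ge X''=m-n>0$, it suffices to prove $\frac{(m-n)+r''s}{(m-n)+s}\ge \frac79$ whenever $m\ge\frac32 n$ and $s\le\min(n,m/3)$. Substituting $r''=5n/(4m+3n)$, this reduces to the single inequality $W:=2(m-n)(4m+3n)-s(28m-24n)\ge 0$. Since $W$ is decreasing in $s$, the worst case is $s=\min(n,m/3)$, and the plan is to split on the two regimes: for $\frac32 n\le m\le 3n$ take $s=m/3$, where $W=-\frac43 m^2+6mn-6n^2$ equals (up to the positive factor $\frac23$) $-(2m-3n)(m-3n)\ge 0$; for $m\ge 3n$ take $s=n$, where $W=2(4m-3n)(m-3n)\ge 0$. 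Both are nonnegative on their ranges, giving ratio $\ge 7/9$, with equality in the cubic case $m=\frac32 n,\ s=n/2$ and in the $K_{3,t}$-type regime $m=3n,\ s=n$. The main obstacle I expect is identifying that the correct accounting uses the minimum-degree-$3$ bound $s\le|E''|/3$ rather than the naive $s\le|V''|$ (the latter only yields $19/27$), and verifying that the \emph{only} reductions spending excess are the trivial $K_1/K_2$ removals, which is precisely what lets the free recoveries $k_1$ pay for the weaker guarantee $r''$ on the dense remainder up to the factor $7/9$.
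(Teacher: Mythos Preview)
Your argument is correct, and it is a genuinely different route from the paper's.

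The paper does \emph{not} use the contracting $2$-elimination of Section~\ref{sec:elimination} nor the potential $X=|E|-|V|$. Instead it keeps a bipartition $V=V_1\cup V_2$, uses a non-contracting variant of $2$-elimination (moving two of the three vertices to $V_2$ and rerouting edges), and adds a Nemhauser--Trotter step so that when the process stops the remaining graph $H_1$ has both minimum degree~$3$ \emph{and} independence ratio $\alpha\le 1/2$. On $H_1$ it then invokes the raw greedy bound $\frac{1+\alpha^2}{2c+1+\alpha}$ (not Theorem~\ref{thm:isolated}), combines it with the trivial estimate $|I_2|\ge |V_2|-|E(H_2)|$ for the moved part, and finishes with a two-variable calculus lemma (Lemma~\ref{lem:7/9}) showing $\frac{c-1+\frac{1+\alpha^2}{2c+1+\alpha}}{c-1+\alpha}\ge 7/9$ on $[3/2,\infty)\times[0,1/2]$. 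Two auxiliary lemmas (Lemmas~\ref{lem:H2} and~\ref{lem:transform}) are needed to certify that the bipartition-based reductions are exact and that the $H_2$ optimum can be found and glued back.

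Your proof is more modular: you treat Theorem~\ref{thm:isolated} as a black box on the residual graph, so you never need the NT step, the two structural lemmas, or the calculus lemma. The price is that you must supply two extra ingredients the paper does not use: the excess bookkeeping (only $K_1$ and $K_2$ removals increase $X$, hence $k_1\ge m-n$), and the degree-$3$ bound $s\le m/3$ on the residual optimum. With those in hand the optimization collapses to the clean factorings $-\tfrac{2}{3}(2m-3n)(m-3n)\ge 0$ and $2(4m-3n)(m-3n)\ge 0$, and the tight cases you identify ($m=\tfrac32 n,\ s=\tfrac n2$ and $m=3n,\ s=n$) coincide with the paper's worst case $c=3/2,\ \alpha=1/2$. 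Both arguments hit $7/9$ on the nose; yours is shorter and avoids the two-variable optimization, while the paper's makes more visible which parameters of $H_1$ (namely $\alpha$) are the bottleneck.
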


\begin{proof}
Let $G=(V,E)$ be a graph of average degree $2$. We shall have a sequence of partitions of $V$ into $V_1$ and $V_2$, and consider the graphs $H_1$ induced on $V_1$ and $H_2$ induced on $V_2$. 
Initially,
$V_1=V$ and $V_2=\emptyset$. Move vertices from $V_1$ to $V_2$ by iteratively using any of the following {\em simplifications} (as long as they are applicable to the current graph $H_1$).

{\bf 0-elimination}: move isolated vertices from $V_1$ to $V_2$.

{\bf 1-elimination}: move vertices $u$ of degree $1$ and their neighbors from $V_1$ to $V_2$. That is, if $(u,v)$ is an edge in $H_1$ and $u$ has degree~1 in $H_1$, move both $u$ and $v$ to $V_2$ (and hence the edge $(u,v)$ to $H_2$).

{\bf 2-elimination}: Consider a vertex $v_1$ of degree $2$ in $H_1$, incident to  $v_2,v_3 \in V_1$. If $v_2,v_3$ are neighbors, move all three vertices  to $V_2$ (inducing a triangle in $H_2$). Otherwise move $v_1$ and $v_2$ to $V_2$ ($v_2$ is chosen arbitrarily, $v_3$ would do just as well), leave $v_3$ in $V_1$, but change $G$ (and hence also the induced graphs $H_1$ and $H_2$) by connecting $v_3$ to those neighbors of $v_2$ that are still in $V_1$, and disconnecting $v_2$ from these neighbors. Observed that the edge $(v_1,v_2)$ is now in $H_2$.

{\bf NT} (for Nemhauser-Trotter): Run the LP for MIS on $H_1$ and find a half integral optimal solution (as in~\cite{NT}). If there are vertices with integer values (0 or 1), move them from $V_1$ to $V_2$.

Before continuing to describe our algorithm, let us present some consequences of the simplifications.

\begin{lemma}
\label{lem:H2}
$H_2$ has a maximum independent set $I_2$ such that there is no edge between $I_2$ and $V_1$. Moreover, such an independent set can be found in polynomial time.
\end{lemma}

\begin{proof}
The proof is by induction on the number of simplification steps. The base case is when $H_2$ is the empty graph, and then the lemma trivially holds. Given that the lemma holds for $H_2$, let $H_2'$ be the graph after one additional simplification step. The inductive hypothesis implies that in $H_2$ one finds a maximum independent set with no neighbors in $V_1$, and hence no edges to those vertices newly added to $H_2'$. Hence it remains to show that in the subgraph induced on those vertices newly added to $H_2'$ one can find a maximum independent set not connected to any vertex of the new $V_1$. For a 0-elimination, take the new vertex, for a 1-elimination take the vertex $u$ of degree~1, for a 2-elimination in which a triangle was moved take $v_1$, in a 2-elimination in which an edge $(v_1,v_2)$ was moved (and other neighbors of $v_2$ were transferred to $v_3$) take $v_2$, and for NT take those vertices of value~1. (These 1-vertices cannot have neighbors in the new $V_2$ since vertices there had fractional value~1/2. Moreover, the 1-vertices form a maximum independent set on the subgraph induced on vertices of value~0 and~1. Otherwise, there is a set of vertices of value~0 with fewer neighbors of value~1, but then the value of the LP could be raised by changing all these values to 1/2.)
\end{proof}

Let $G'$ be the graph obtained from $G$ by using the above simplifications. Observe that $G'$ might not be equal to $G$ because 2-elimination moves endpoints of edges from one vertex to another. Nevertheless:

\begin{lemma}
\label{lem:transform}
Given any independent set $I'$ in $G'$ one can find in polynomial time an independent set $I$ of equal size in $G$, and vice versa.
\end{lemma}

\begin{proof}
The only simplification step that makes $G'$ different from $G$ is a $2$-elimination step in which the neighbors $v_2,v_3$ of $v_1$ do not share an edge. Consider $G$ with the partition $V_1,V_2$ just before the $2$-elimination occurs and $G'$ with the partition $V_1', V_2'$ immediately after the elimination took place. Let $I$ be an independent set in $G$. If $I$ contains no vertex from $v_1,v_2, v_3$, then $I$ is independent also in $G'$. If $I$ contains one vertex from $v_1,v_2, v_3$, put $v_2$ in $I'$. Lemma ~\ref{lem:H2} implies that we may assume that $I \cap V_2$ has no edges to $v_2$, and hence $I'$ is indeed an independent set. If $I$ contains two vertices from $v_1,v_2, v_3$, then they must be $v_2$ and $v_3$, and then $I$ is also an independent set in $G'$. Observe that $I$ cannot contain all three $v_1,v_2, v_3$. Conversely, let $I'$ be an independent set in $G'$. If $I'$ contains no vertex from $v_1,v_2, v_3$, then $I'$ is independent also in $G$. If $I'$ contains one vertex from $v_1,v_2, v_3$, put $v_1$ in $I$. Lemma ~\ref{lem:H2} implies that we may assume that $I' \cap V_2$ has no edges to $v_1$, and hence $I$ is indeed an independent set. If $I'$ contains two vertices from $v_1,v_2, v_3$, then they must be $v_2$ and $v_3$, and then $I'$ is also an independent set in $G$. Observe that $I'$ cannot contain all three $v_1,v_2, v_3$.
\end{proof}

When none of the above simplifications are applicable, do the following:

\begin{enumerate}

\item Run the greedy algorithm to find an independent set $I_1$ in $H_1$.

\item In $H_2$ find a maximum independent set $I_2 \subset V_2$ with no edges to $I_1$, as in Lemma~\ref{lem:H2}.

\item Use Lemma~\ref{lem:transform} to output an independent set in $G$ of size $|I_1| + |I_2|$.

\end{enumerate}

The above lemmas imply that the algorithm can be run in polynomial time. We now analyse the size of the independent set that it finds.

When all simplification steps end, let $n_1$ and $cn_1$ denote the number of vertices and edge in $H_1$. Observe that $c \geq \frac{3}{2}$ because the minimum degree in $H_1$ is~$3$. Let $\alpha$ be the independence ratio of $H_1$, and observe that $\alpha \le 1/2$ (otherwise NT could have been applied). The number of vertices in $H_2$ is $n-n_1$. Since $G$ contains $n$ edges and our simplifications do not increase the number of edges, the number of edges in $H_2$ is at most $n-cn_1$. As every graph $G=(V,E)$ has an independent set of size at least $|V|-|E|$ it follows that the size of an independent set in $H_2$ is at least $(c-1)n_1$. Since the average degree in $H_1$ is $2c$ the greedy algorithm finds in $H_1$ an independent set of size $\frac{1 + \alpha^2}{2c + 1 + \alpha}n_1$. It follows that the approximation ratio of our algorithm is at least $\frac{c-1+\frac{1+\alpha^2}{2c+1+\alpha}}{c-1+\alpha}$. For $c \ge 3/2$ and $0 \le \alpha \le 1/2$ Lemma~\ref{lem:7/9} implies that the aforementioned expression is at least $7/9$, proving Theorem~\ref{thm:7/9}.
\end{proof}

\begin{lemma}
\label{lem:7/9}
Let $(x,y) \in [3/2,\infty) \times [0,1/2]$. Then \newline $f(x,y)=\frac{2x^2+y^2-x+xy-y}{2x^+y^2-x+3xy-1} \ge \frac{7}{9}$.
\end{lemma}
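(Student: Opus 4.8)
The plan is to reduce the rational inequality to a single polynomial inequality and then exhibit a sign-definite decomposition. First I read the denominator as $2x^2 + y^2 - x + 3xy - 1$ (the printed $2x^+$ being a typo for $2x^2$); this is exactly the expression produced by clearing $2c+1+\alpha$ from the approximation ratio $\frac{c-1+(1+\alpha^2)/(2c+1+\alpha)}{c-1+\alpha}$ in the proof of Theorem~\ref{thm:7/9}, under the identification $x=c$, $y=\alpha$. Before cross-multiplying I would check that the denominator is strictly positive on the whole region: since $2x^2 - x - 1 = (2x+1)(x-1) > 0$ for $x \ge 3/2$, and the remaining terms $y^2 + 3xy$ are nonnegative for $x,y \ge 0$, the denominator is positive throughout $[3/2,\infty)\times[0,1/2]$. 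This legitimizes multiplying through without reversing the inequality.

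Next I would cross-multiply, so that $f(x,y) \ge 7/9$ becomes equivalent to $9(2x^2+y^2-x+xy-y) - 7(2x^2+y^2-x+3xy-1) \ge 0$. Collecting terms, this difference equals
\[
g(x,y) = 4x^2 + 2y^2 - 2x - 12xy - 9y + 7,
\]
so the lemma reduces to proving $g \ge 0$ on $[3/2,\infty)\times[0,1/2]$.

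The key step, and the one that makes the argument clean, is the change of variables $u = x - 3/2$ and $v = y - 1/2$, which recenters the problem at the corner $(3/2,1/2)$. On the region we have $u \ge 0$ and $-1/2 \le v \le 0$. Substituting and simplifying, the constant term cancels and the expression collapses to
\[
g = 4u^2 + 2v^2 - 12uv + 4u - 25v.
\]
Now every summand is manifestly nonnegative under the sign constraints: $4u^2, 2v^2 \ge 0$; $-12uv \ge 0$ because $u \ge 0$ and $v \le 0$; $4u \ge 0$; and $-25v \ge 0$ because $v \le 0$. Hence $g \ge 0$, with equality only at $u=v=0$, i.e.\ at $(x,y)=(3/2,1/2)$, which recovers the tight ratio $7/9$ (the case of minimum degree $3$ and independence ratio $1/2$).

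The main obstacle is conceptual rather than computational: the quadratic $g$ has an indefinite Hessian (its determinant is $8\cdot 4 - (-12)^2 < 0$), so it is a saddle and no global convexity argument applies. A brute-force alternative would be to minimize $g$ over the boundary piece by piece --- the edges $x=3/2$, $y=0$, $y=1/2$, and the limit $x\to\infty$ --- after checking that the unconstrained critical point (which satisfies $y = 3x + 9/4$) lies far outside the region. That works but is tedious; the shift to $(u,v)$ sidesteps all of it by making the minimizing corner and the favorable sign pattern visible at once, so finding that particular recentering is the one nonroutine insight in the proof.
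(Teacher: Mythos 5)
Your proof is correct and follows essentially the same route as the paper's: both reduce the claim to the polynomial inequality $g(x,y)=4x^2+2y^2-2x-12xy-9y+7\ge 0$ and identify the corner $(3/2,1/2)$ as the minimizer, the paper finishing via the signs of $\partial g/\partial x$ and $\partial g/\partial y$ while you recenter at the corner and write $g$ as a sum of termwise-nonnegative pieces. Your explicit verification that the denominator is positive on the region (which the paper leaves implicit before rearranging $f = 1 + \frac{1-y-2xy}{2x^2+y^2-x+3xy-1}$) is a welcome addition.
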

\begin{proof}
$\frac{2x^2+y^2-x+xy-y}{2x^+y^2-x+3xy-1} =1+\frac{1-y-2xy}{2x^+y^2-x+3xy-1}$. Thus it suffice to prove that for $x$ and $y$ in the above ranges $\frac{1-y-2xy}{2x^+y^2-x+3xy-1} \ge -\frac{2}{9}$ or $4x^2+2y^2-2x-12xy-9y+7 \ge 0$. Let $g(x,y)=4x^2+2y^2-2x-12xy-9y+7$. Clearly $\frac{\partial g(x,y)}{\partial x}=8x-2-12y > 0$ for such $x,y$ and $\frac{\partial g(x,y)}{\partial y}=4y-2x-9 < 0$ for such $x,y$. Hence $g(x,y) \ge g(3/2,1/2)=0$. As desired.
\end{proof}

Replacing the algorithm of Theorem~\ref{thm:isolated} by the algorithm of Theorem~\ref{thm:7/9} in algorithm PLG we obtain:

\begin{theorem}
\label{thm:recoverable7/9}
A canonical recoverable value $\rho = 7/3$ is achievable for MIS.
\end{theorem}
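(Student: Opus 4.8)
The plan is to run algorithm PLG essentially unchanged but to swap out its approximation subroutine. Recall that PLG picks a random permutation, forms $G_3 = G[L_1 \cup L_2 \cup L_3]$, and approximates MIS on $G_3$. In Theorem~\ref{thm:expectation} the subroutine applied to $G_3$ was the $\frac{5}{2d_{avg}+3}$-approximation of Theorem~\ref{thm:isolated}, which at the relevant average degree $d_{avg}=2$ yields ratio $5/7$ and hence $\rho = 3\cdot\frac57 = \frac{15}{7}$. I would instead feed $G_3$ to the $\frac79$-approximation algorithm of Theorem~\ref{thm:7/9}, which is applicable precisely because $G_3$ has average degree (in expectation) at most $2$. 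Since every vertex $v\in I$ lands in $G_3$ with probability $\frac{3}{d(v)+1}$ (using the preprocessed assumption of minimum degree $\ge 2$, so that all three layers are reachable), and since $I\cap G_3$ is an independent set of $G_3$ and hence a lower bound on a maximum independent set there, the subroutine recovers in expectation at least $\frac79\,E[|I\cap G_3|] = \frac79\sum_{v\in I}\frac{3}{d(v)+1} = \frac73\sum_{v\in I}\frac{1}{d(v)+1}$, which is exactly canonical recoverable value $\rho = 7/3$.

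First I would record the two structural facts about $G_3$ needed to invoke Theorem~\ref{thm:7/9}. Because each vertex of degree $\ge 2$ falls into each of $L_1,L_2,L_3$ with probability $\frac{1}{d(v)+1}$, we have $E[|L_1|]=E[|L_2|]=E[|L_3|]=\sum_{v\in V}\frac{1}{d(v)+1}$, so $E[|V(G_3)|]=3\sum_v\frac{1}{d(v)+1}$. Orienting edges toward earlier vertices, a vertex in $L_i$ contributes at most $i-1$ edges to $G_3$, giving at most $|L_2|+2|L_3|$ edges and hence $E[|E(G_3)|]\le 3\sum_v\frac{1}{d(v)+1}$. Thus the expected average degree of $G_3$ is at most $2$, the regime covered by Theorem~\ref{thm:7/9}.

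The main obstacle is the same one resolved in Theorem~\ref{thm:expectation}: the average degree of $G_3$ and the quantity $|I\cap G_3|$ are random variables rather than their expectations, and the approximation ratio of the subroutine depends on the realized average degree. I would reuse the copying argument of Theorem~\ref{thm:expectation} verbatim: take $N$ disjoint copies $G_N$ of the preprocessed $G$, run the modified PLG on $G_N$, and use that $|L_1|,|L_2|,|L_3|,|I_N\cap G_3|$ and $|E(G_3)|$ are each a sum of $N$ independent bounded variables, hence concentrated within relative error $o(\epsilon)$. The realized average degree of $G_3$ is then at most $2+o(\epsilon)$; since the guarantee of Theorem~\ref{thm:7/9} shrinks by only $O(\gamma)n$ when the average degree is $2+\gamma$ while the optimum stays $\Theta(n)$, the subroutine still recovers a $\frac79-o(\epsilon)$ fraction of a maximum independent set of $G_3$, which dominates $(1-o(\epsilon))\,N\sum_{v\in I}\frac{3}{d(v)+1}$. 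For this to yield a per-copy bound I must check, exactly as in Theorem~\ref{thm:expectation}, that the modified algorithm (simplifications, then greedy, then the Nemhauser--Trotter step) decomposes over connected components, so that the expected output size on $G_N$ equals $N$ times that on a single copy; the output-size additivity of min-degree greedy and of the half-integral rounding makes this routine. Letting $\epsilon\to 0$ then gives expected size at least $\frac73\sum_{v\in I}\frac{1}{d(v)+1}$.

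Finally I would confirm that this per-vertex guarantee is the canonical recoverable value for $\rho = 7/3$. After removing vertices of degree at most~1 (without loss of generality by Section~\ref{sec:elimination}), every remaining vertex has degree $d\ge 2$, so $\min[1,\frac{7/3}{d+1}] = \frac{7/3}{d+1}$, and the bound $\frac73\sum_{v\in I}\frac{1}{d(v)+1}$ coincides with $\sum_{v\in I}\min[1,\frac{7/3}{d(v)+1}]$. Hence $\rho = 7/3$ is achieved, proving Theorem~\ref{thm:recoverable7/9}. Note that no 2-elimination to minimum degree~3 (and no passage to $G_4$) is needed here, in contrast to Corollary~\ref{cor:20/9}; the improvement over $15/7$ comes entirely from the stronger subroutine of Theorem~\ref{thm:7/9}.
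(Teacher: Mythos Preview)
Your proposal is correct and follows exactly the paper's approach: replace the subroutine of Theorem~\ref{thm:isolated} inside PLG by the $7/9$-approximation of Theorem~\ref{thm:7/9}, and reuse the copying argument of Theorem~\ref{thm:expectation} to pass from expectations to realizations. The paper states this in one line, while you spell out the decomposition over connected components and the robustness of the $7/9$ bound to an $o(\epsilon)$ excess in average degree; these are precisely the checks the paper leaves implicit.
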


\section{MIS in $k$-colored graphs}

In analyzing algorithm PLG we only used the fact that $G_3$ has small average degree. However, $G_3$ has other structural properties as well. It is 3-colorable, and furthermore, the $3$-coloring can be found efficiently (e.g., by coloring the vertices of $G_3$ inductively in the order in which they appear in the permutation that generated $G_3$). We use the term $k$-colored graph to denote a graph with a given $k$-coloring. The following proposition is known~\cite{Hochbaum}.

\begin{proposition}
\label{pro:positive}
MIS on $k$-colored graphs can be approximated within $2/k$.
\end{proposition}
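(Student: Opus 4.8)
The plan is to prove that MIS on a $k$-colored graph can be approximated within ratio $2/k$ by exploiting the coloring to extract a single bipartite subgraph that captures enough of the optimum. Let $G$ be equipped with a given $k$-coloring into color classes $C_1, \dots, C_k$, each of which is an independent set. I would consider all $\binom{k}{2}$ pairs of color classes. For a pair $(C_i, C_j)$, the induced subgraph $G[C_i \cup C_j]$ is bipartite, since every edge inside it must go between the two classes. On a bipartite graph MIS is solvable exactly in polynomial time (for instance via König's theorem and bipartite matching, or equivalently via the LP which is integral on bipartite graphs). So for each pair I can find the true maximum independent set of $G[C_i \cup C_j]$ in polynomial time.

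The main idea is an averaging argument tying these pairwise optima back to the global optimum. Let $I^{*}$ be a maximum independent set in $G$, and write $a_i = |I^{*} \cap C_i|$, so that $|I^{*}| = \sum_{i} a_i$. For any pair $(C_i, C_j)$, the set $I^{*} \cap (C_i \cup C_j)$ is an independent set inside $G[C_i \cup C_j]$, so the exact bipartite solver returns something of size at least $a_i + a_j$. Now I average $a_i + a_j$ over all unordered pairs: $\sum_{i<j}(a_i + a_j) = (k-1)\sum_i a_i = (k-1)|I^{*}|$, and there are $\binom{k}{2} = k(k-1)/2$ pairs. Hence the best pair yields a subgraph whose maximum independent set has size at least the average, namely $\frac{(k-1)|I^{*}|}{k(k-1)/2} = \frac{2|I^{*}|}{k}$. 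Taking the best of the $\binom{k}{2}$ bipartite solutions therefore produces an independent set of size at least $\frac{2}{k}|I^{*}|$, which is exactly the claimed $2/k$ approximation ratio.

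The algorithm is then simply: for each of the polynomially many color pairs solve MIS exactly on the induced bipartite graph, and output the largest independent set found. Correctness follows from the averaging bound above, and the running time is polynomial because there are $O(k^2) \le O(n^2)$ pairs and each bipartite MIS is solved in polynomial time.

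I expect the only real subtlety to be the bipartite exact-solvability step rather than the counting. One must be careful that $G[C_i \cup C_j]$ is genuinely bipartite — this is immediate because each color class is independent, so no edge lies within a class and every edge crosses between $C_i$ and $C_j$ — and then invoke the standard fact that maximum independent set (equivalently, minimum vertex cover via König's theorem) is polynomial on bipartite graphs. The averaging argument itself is elementary once the quantities $a_i$ are introduced, and the bound is tight in the sense that it matches the best pair rather than requiring any structural assumption on how $I^{*}$ distributes across colors.
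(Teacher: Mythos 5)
Your proof is correct and is essentially the paper's first proof of this proposition (try all pairs of color classes, solve MIS exactly on each bipartite induced subgraph, and argue by averaging that some pair captures a $2/k$ fraction of the optimum); you have simply filled in the averaging computation that the paper leaves implicit. The paper also gives a second, different proof via the half-integral LP relaxation plus taking the largest color class, but your route is one of the two intended ones.
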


\begin{proof}
We present two proofs.

\noindent 1) Consider all possible pairs of color classes. At least one such pair contains at least $2/k$ fraction of opt. Each such pair is a bipartite graph on which MIS can be solved exactly.

\noindent 2) Use an LP for MIS which by its half integrality and standard rounding techniques allows us to assume that opt is at most half the graph. Thereafter, the largest color class provides at least a $2/k$ approximation.
\end{proof}

One may hope that Proposition~\ref{pro:positive} can be improved, for example, by combining the two different proofs given to it. If so, this would give an approximation ratio better than $2/3$ for $G_3$, hence potentially replacing or even surpassing the $5/7$ bound that we used for $G_3$ based on Theorem~\ref{thm:isolated}.

We show that Proposition~\ref{pro:positive} is nearly tight, unless vertex cover can be approximated within a ratio better than 2. (In particular, this implies UGC-hardness, since hardness of unique games is a stronger assumption than inapproximability of VC beyond a ratio of 2, see \cite{KR}.)

Let $G$ be an $n$-vertex graph in which one wants to approximate VC within a ratio better than 2. As is well known (e.g., \cite{HR97}), this is the same as distinguishing for some $\epsilon > 0$ whether $\alpha(G)$, the size of MIS in $G$, satisfies $\alpha(G) \ge (1/2 - \epsilon)n)$ or $\alpha(G) \le \epsilon n$.

For $k>2$ construct from $G$ a $k$-colored graph $G'$ as follows. For every vertex $v \in G$, the graph $G'$ contains $k$ copies $v_1, \ldots v_k$. All vertices with the same index $i$ form an independent set (hence a color class). Between any two distinct color classes other than class $k$, place a bipartite graph mimicking the edge pattern of $G$, connecting vertex $v_i$ with vertex $u_j$  (where $k \not= j \not= i \not= k$) if $(v,u)$ (or $(u,v)$) is an edge in $G$. Each vertex $v_k$ of the $k$th class is connected only to its own copies $v_i$ in the other classes. Hence the bipartite graph between class $k$ and any other class is simply a perfect matching.

\begin{lemma}
\label{lem:negative} In the reduction above,
$\alpha(G') = n + (k-2)\alpha(G)$.
\end{lemma}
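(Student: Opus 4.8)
The plan is to prove the identity by establishing matching lower and upper bounds on $\alpha(G')$. First I would fix notation: write $C_i = \{v_i : v \in V(G)\}$ for the $i$-th color class and call $\{v_1,\dots,v_k\}$ the \emph{group} of $v$. Two facts about the construction organize the whole argument. Within a group, $v_k$ is adjacent to each of $v_1,\dots,v_{k-1}$, while these $k-1$ copies are mutually nonadjacent (there is no self-loop $(v,v)$ in $G$). Between two distinct groups $v,u$, the only edges are $v_i \sim u_j$ with $i\neq j$ and $i,j \in \{1,\dots,k-1\}$, present exactly when $(v,u)\in E(G)$; in particular $v_k$ has no neighbor outside its own group.

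For the lower bound I would exhibit an explicit independent set of the claimed size. Starting from a maximum independent set $J$ in $G$, put all of $v_1,\dots,v_{k-1}$ into $S$ for each $v\in J$, and put $v_k$ into $S$ for each $v\notin J$. Counting gives $|S| = (k-1)\alpha(G) + (n-\alpha(G)) = n + (k-2)\alpha(G)$, and independence is a short check using the two facts above: copies of distinct $v,u\in J$ never conflict because $J$ is independent in $G$, and each chosen $v_k$ meets only $v_1,\dots,v_{k-1}$, none of which are chosen.

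The upper bound is the substantive direction. Given an arbitrary independent set $S$ in $G'$, for each $v$ set $a_v = |S \cap \{v_1,\dots,v_{k-1}\}|$ and $b_v = 1$ if $v_k\in S$ and $0$ otherwise; since $v_k$ is adjacent to all of $v_1,\dots,v_{k-1}$, having $a_v\geq 1$ forces $b_v=0$. The key structural claim I would prove is: if $v\sim u$ in $G$ with $a_v,a_u\geq 1$, then $a_v=a_u=1$ and the single chosen indices coincide. This follows because any $v_i\in S$ and $u_j\in S$ with $i\neq j$ form an edge of $G'$, so every active index of $v$ must equal every active index of $u$. Now let $J=\{v:a_v\geq 1\}$ and let $J_0\subseteq J$ be the vertices of $J$ with no neighbor in $J$. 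The claim shows each $v\in J\setminus J_0$ has $a_v=1$, so $\sum_{v\in J}a_v \leq (k-1)|J_0| + (|J|-|J_0|)$. Since vertices outside $J$ contribute at most $b_v\le 1$ each while vertices in $J$ contribute exactly $a_v$, we obtain $|S|\le \sum_{v\in J}a_v + (n-|J|) \le n + (k-2)|J_0|$. Because $J_0$ is by construction an independent set in $G$, we have $|J_0|\le\alpha(G)$, yielding $|S|\le n+(k-2)\alpha(G)$ and completing the proof.

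I expect the main obstacle to be the structural claim and the associated bookkeeping in the upper bound: arguing cleanly that two adjacent vertices that are both active among the first $k-1$ classes must each use a single, common index, and then isolating exactly the set $J_0$ so that the groups contributing more than $1$ vertex are precisely charged against $\alpha(G)$. The lower bound and the elementary adjacency facts about the construction are routine by comparison.
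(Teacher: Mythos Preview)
Your argument is correct. The lower-bound construction is exactly the natural one, and your upper bound via the structural claim (if $v\sim u$ in $G$ and both groups are active among classes $1,\dots,k-1$, then each contributes a single copy and the indices agree) together with the isolation set $J_0$ is sound: $J_0$ is independent in $G$, the non-isolated members of $J$ contribute exactly one vertex each, and the count $|S|\le n+(k-2)|J_0|\le n+(k-2)\alpha(G)$ follows.

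The paper takes a slightly different route for the upper bound. Instead of bounding an arbitrary independent set directly, it first proves an exchange lemma: any maximum independent set in $G'$ can be modified, without losing vertices, so that for every $v$ either $v_k$ alone is selected or all of $v_1,\dots,v_{k-1}$ are. (If at most one copy of $v$ is present, swap it for $v_k$; if at least two copies $v_i,v_j$ with $i,j<k$ are present, then every remaining $v_\ell$ with $\ell<k$ can be added, since any would-be neighbor $u_m$ of $v_\ell$ is already a neighbor of $v_i$ or $v_j$.) After this normalization the set $\{v:v_k\notin I'\}$ is immediately independent in $G$ (using $k\ge 3$), and the size formula $n+(k-2)|\{v:v_k\notin I'\}|$ drops out with no further bookkeeping. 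Both proofs rest on the same adjacency facts; the paper normalizes first and then counts trivially, whereas you count directly and pay for it with the $J$ versus $J_0$ distinction. Your approach has the minor bonus of bounding every independent set rather than only maximum ones.
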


This gives a gap of $\frac{n + O(\epsilon kn)}{kn/2 - O(\epsilon kn)} = (1 + o(1))\frac{2}{k}$ between {\em no} and {\em yes} instances.

To prove Lemma~\ref{lem:negative}, we shall use the following lemma.

\begin{lemma}
\label{lem:CopiesOfOneVertex}
There is a MIS $I'$ in $G'$ such that for every vertex $v \in G$, either $v_i \in I'$ for all $i \not= k$, or $v_k \in I'$.
\end{lemma}

\begin{proof}
Consider an arbitrary independent set $I'$ in $G'$. If either no copy or only one copy of $v$ is in $I'$, then without loss of size of $I'$ we may take this to be $v_k$. If at least two copies of $v$ are in $I'$, then neither one of them can be $v_k$. But then, we can add all other $v_i$ with $i \not= k$ to $I'$, since none of them can be a neighbor of a vertex already in $I'$.
\end{proof}

Lemma~\ref{lem:CopiesOfOneVertex} implies that all vertices $v \in G$ for which $v_k \not\in I'$ form an independent set in $G$. Lemma~\ref{lem:negative} easily follows.
Applying ~\ref{lem:negative} and \cite{KR} we immediately obtain:
\begin{theorem}
\label{thm:hard}
Let $k$ be an integer greater than $2$. Assume that for every $\epsilon > 0$ it is NP-hard to distinguish between graphs with independent set of size at least $(\frac{1}{2}-\epsilon)n$ to graphs with independent sets of size at most $\epsilon n$. Then, for arbitrary $\delta > 0$ it is NP-hard to approximate MIS on $k$-colored graphs within a factor larger than $\frac{2}{k}+\delta$. In particular, approximating MIS on $k$-colored graphs within a factor larger than $\frac{2}{k}+\delta$ is unique-games hard.
\end{theorem}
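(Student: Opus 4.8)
Theorem~\ref{thm:hard} combines the reduction of Lemma~\ref{lem:negative} with the known UGC-hardness of Vertex Cover. Let me sketch the proof.

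The plan is to route the assumed gap-hardness for MIS through the reduction $G \mapsto G'$ constructed above, using Lemma~\ref{lem:negative} to translate independent-set sizes into the $k$-colored world, and then to read off the inapproximability factor. The final sentence of the theorem should then follow by recalling that \cite{KR} establishes exactly the hypothesized gap hardness under the Unique Games Conjecture.

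First I would fix an integer $k > 2$ and an arbitrary $\delta > 0$, and take an $n$-vertex instance $G$ for which (by hypothesis) it is NP-hard to decide whether $\alpha(G) \ge (\frac{1}{2} - \epsilon)n$ or $\alpha(G) \le \epsilon n$; the accuracy parameter $\epsilon > 0$ will be pinned down at the very end as a function of $k$ and $\delta$. The construction of $G'$ runs in polynomial time and outputs a properly $k$-colored graph on exactly $kn$ vertices (each index class is independent by design), so the entire argument reduces to comparing $\alpha(G')$ across the two cases.

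Invoking Lemma~\ref{lem:negative}, a YES instance satisfies
\[
\alpha(G') = n + (k-2)\alpha(G) \ge n + (k-2)\left(\frac{1}{2} - \epsilon\right)n = \frac{k}{2}\,n - (k-2)\epsilon n,
\]
whereas a NO instance satisfies $\alpha(G') = n + (k-2)\alpha(G) \le n + (k-2)\epsilon n$. Dividing the NO bound by the YES bound gives
\[
\frac{n + (k-2)\epsilon n}{\frac{k}{2}n - (k-2)\epsilon n} = \frac{2}{k}\cdot\frac{1 + (k-2)\epsilon}{1 - \frac{2(k-2)}{k}\epsilon},
\]
which converges to $2/k$ as $\epsilon \to 0$. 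Thus for any prescribed $\delta$ I can choose $\epsilon$ small enough that this ratio drops below $\frac{2}{k} + \delta$; an approximation algorithm beating $\frac{2}{k} + \delta$ on $k$-colored graphs would then decide the YES/NO question on $G$, contradicting the assumption.

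I do not expect a genuine obstacle here, since Lemma~\ref{lem:negative} already carries the combinatorial weight of the reduction. The only items demanding care are bookkeeping: confirming that $G'$ is validly $k$-colored (immediate from the index classes) and that the factor of $kn$ vertices feeds correctly into the denominator so that the gap really collapses to $2/k$ rather than to some other multiple of $1/k$. For the concluding unique-games statement I would simply cite \cite{KR}: the Khot--Regev theorem shows that the assumed gap form of MIS --- equivalently, $2$-inapproximability of Vertex Cover --- is implied by the Unique Games Conjecture, so the reduction above makes approximating MIS on $k$-colored graphs within $\frac{2}{k} + \delta$ unique-games hard.
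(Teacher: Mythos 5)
Your proposal is correct and follows essentially the same route as the paper: it invokes the reduction $G \mapsto G'$ and Lemma~\ref{lem:negative} to get $\alpha(G') = n + (k-2)\alpha(G)$, computes the resulting YES/NO gap of $(1+o(1))\frac{2}{k}$ on the $kn$-vertex $k$-colored graph $G'$, and cites \cite{KR} for the unique-games conclusion. The only difference is that you write out the arithmetic that the paper compresses into a single displayed ratio.
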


Recall that a vertex cover is the complement of an independent set. For $k$-colored graphs, vertex cover can be approximated within $2-2/k$. Theorem~\ref{thm:hard} implies (under the assumptions of the theorem) that this is best possible, up to low order terms.

Note that $G'$ constructed in the proof of Theorem~\ref{thm:hard} has additional structural properties beyond being $k$-colorable (many symmetries and an especially simple pattern of connections with color class $k$). This excludes better than $2/k$ approximation ratios even on more restricted families of graphs than those with a $k$-coloring. One such family is that of graphs composed of one part that is a tree, another that is an independent set, and edges between these two parts. These graphs are 3-colorable, but in addition, the pattern of edges between two of the color classes is very simple. Our proof implies that approximating MIS within a ratio better than 2/3 is hard on these graphs. Such graphs actually arise in a variation of the algorithm PLG, if instead of the first three layers one takes the first two layers, forming the part that is a tree, and the first layer in the {\em reverse} order of the permutation, forming the part that is an independent set. Note however that the graph $G_k$ used in algorithm PLG is $k$-colorable due to a structural property of being $(k-1)$-degenerate, whereas the graph $G'$ constructed in the proof of Theorem~\ref{thm:hard} is not necessarily $(k-1)$-degenerate. MIS is APX-hard even on $2$-degenerate graphs (this can be proved by removing one vertex from a $3$-regular graphs), but it appears possible to obtain algorithms that approximate MIS on such graphs within a ratio better than $2/3$. This may potentially lead to higher recoverable values than those shown in this work.

\section{Discussion}

We presented algorithms that achieve a recoverable value with $\rho = 2$ for MWIS and $\rho = 7/3 > 2$ for MIS. We also showed that previous algorithms (such as greedy and greedy+LP) fail to achieve a recoverable value with $\rho$ bounded away from~1. Finally, we proved a hardness of approximation result for MIS on graphs with a given $k$-coloring. Some questions remain open, most notably, how much higher can $\rho$ be pushed (recall that it cannot reach~4), and whether there truly is a gap between MIS and MWIS with respect to recoverable value.

Let us provide some comments on our permutation based algorithms. They are not guaranteed to produce a maximal independent set. After running them it may well be that there remain vertices in $G$ (rather than in $G_k$) not connected to the independent set that they find. This forms a residual graph. Hence if one wants to run our permutation based algorithms in practice, one can improve their performance by extracting an independent set also from the residual graph. We have not attempted to analyze the effect of this on the recoverable value.

The notion of recoverable value and the associated algorithms are most effective for graphs in which the vertices in the maximum independent set have small degree compared to the average degree in the graph. For some other classes of graphs (notably, regular graphs \cite{Berman,Chlebik}) our performance guarantees are poorer than those given by other algorithms.

\subsection*{Acknowledgements}

We thank Magnus Halldorsson for sharing with us his thoughts on the range of values of $d_{avg}$ for which Theorem~\ref{thm:isolated} holds.
{\small
\bibliographystyle{abbrv}
\bibliography{RVarXiv}
}

\begin{appendix}

\section{Some negative examples}

We show that  LP+greedy (mentioned in Section~\ref{sec:relatedwork}) does not meet the performance guarantees of Theorem~\ref{thm:MWIS}.

\begin{proposition}
\label{pro:negative}
For arbitrarily large integers $k,d$ there are graphs such that neither greedy nor LP+greedy find an independent set of size $k+1$ whereas the recoverable value is at least $\frac{\rho kd}{d+1}$. Hence they cannot guarantee a recoverable value with $\rho$ bounded away from~1.
\end{proposition}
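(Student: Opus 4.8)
The plan is to exhibit, for arbitrarily large $k$ and $d$, a graph $G_{k,d}$ together with an independent set $I$ of $kd$ vertices each of degree exactly $d$ (so that $\sum_{v\in I}\rho/(d_v+1)=\rho kd/(d+1)$, giving the claimed lower bound on the recoverable value), on which both min-degree greedy and LP+greedy can be made to output an independent set of size at most $k$. The final accounting is then immediate: to certify a recoverable value of $\rho$ an algorithm must output at least $\rho kd/(d+1)$, so outputting at most $k$ forces $\rho\le (d+1)/d=1+1/d$, which tends to $1$ as $d\to\infty$. Since $k$ and $d$ may be taken arbitrarily large, this shows neither algorithm guarantees a $\rho$ bounded away from $1$, falling short of the $\rho=2$ guarantee of Theorem~\ref{thm:MWIS}.

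First I would build the skeleton that traps greedy: $k$ \emph{trap} vertices $t_1,\ldots,t_k$ and $k$ blocks $G_1,\ldots,G_k$ of good vertices with $|G_i|=d$ and $I=\bigcup_i G_i$, where each $t_i$ is joined to all of $G_i$, so that selecting $t_i$ deletes the whole block $G_i$ from the residual graph. The degree sequence must be arranged (by a cascade, with ties broken against us, or after an infinitesimal perturbation) so that at every step some trap attains the current minimum residual degree and selecting it keeps the remaining traps minimal; then greedy selects $t_1,\ldots,t_k$ in turn, erases $I$ block by block, and halts with the independent set $\{t_1,\ldots,t_k\}$ of size $k$.

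Next I must raise every good vertex to degree exactly $d$ while it is joined to only one trap. I would do this with an auxiliary dense gadget (for instance a clique, or a block of high-degree vertices) supplying the remaining $d-1$ edges at each good vertex; this gadget has to be unattractive to the min-degree rule throughout, and, crucially, must be consumed within the $k$ trap-selections rather than surviving to contribute extra independent vertices to the output (nor may it be removed so early that the good vertices drop in degree and become bait for greedy). Finally, to handle LP+greedy I would force the Nemhauser--Trotter LP to admit the all-$1/2$ vector as an optimal half-integral solution, by endowing $G_{k,d}$ with a perfect matching (so that the fractional optimum is $n/2$) and keeping $|I|<n/2$ so that the ``$I$-to-$1$'' assignment is strictly suboptimal for the LP. Then NT may fix no variable, LP+greedy degenerates to plain greedy, and the analysis above applies unchanged.

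The hard part is the interaction of the first two steps. The min-degree rule is exactly tuned to seek out the low-degree vertices of $I$, whereas any trap that wipes out a full block $G_i$ necessarily has degree at least $d$, no smaller than the good vertices it is meant to preempt. Thus the diversion onto the traps cannot come from the traps being cheaper; it can only be manufactured by equalizing residual degrees along a carefully cascaded construction and exploiting tie-breaking, all while the degree-padding gadget is kept from either leaking independent vertices into greedy's output or prematurely exposing $I$. Reconciling these four simultaneous constraints, namely residual degree low enough to be chosen, true degree exactly $d$ on $I$, no surviving filler, and an all-$1/2$ LP optimum, is the technical crux; everything else is bookkeeping.
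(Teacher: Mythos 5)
Your construction is essentially the paper's: the $k$ traps are its first layer, the $k$ blocks of $d$ good vertices are its second layer (the independent set $I$ of $kd$ degree-$d$ vertices), and the degree-padding gadget is its third layer, a clique of $dk(d-1)$ vertices giving each good vertex its remaining $d-1$ edges. The ``crux'' you flag dissolves in this instantiation: traps and good vertices all have residual degree exactly $d$ at every step (removing a block $G_i$ touches neither the other traps nor the other blocks, only the huge-degree clique), so plain adversarial tie-breaking suffices with no cascade, and the clique is never consumed but merely donates one final vertex, giving output $k+1$ as in the statement; for the LP the paper argues directly that no vertex can be assigned $1$ in an optimal half-integral solution (rather than via a perfect matching), so NT fixes nothing and LP+greedy reduces to greedy.
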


\begin{proof}
Consider the following example. There are three layers. The first layer consists of $k$ vertices, each with $d$ distinct neighbors in the second layer. Every one of the $dk$ second layer vertices has $d-1$ distinct neighbors in the third layer. All $dk(d-1)$ vertices of the third layer form a clique. The second layer is an independent set with $dk$ vertices and
recoverable value of $dk\frac{\rho}{d+1}$. A greedy algorithm might
choose the $k$ vertices in the first layer and a clique vertex for a value of $k+1$.
This applies also to greedy+LP because the optimal half integral solution (to the standard LP, not the recoverable value LP)  of this instance will assign $1/2$ to all vertices, provided that $d$ and $k$ are large enough. (No third layer vertex will be assigned the value~1 since then all third layer vertices must be assigned~0. No second layer vertex will be assigned~1 because this forces $d-1$ distinct third layer vertices to be assigned~0 rather than~$1/2$. No first layer vertex will be assigned~1 because this forces $d$ distinct second layer vertices to be assigned~0 rather than~$1/2$.)
\end{proof}

We remark that the example in Proposition~\ref{pro:negative} (and also simpler examples suffice) shows that the fast algorithm of Theorem~\ref{thm:MWIS} does not offer an expected value of $\rho$ bounded away from~2. The expected number of vertices in the first two layers can be easily seen to be less than $2(k+1)$ (by using linearity of expectation to sum over individual vertices), and hence the expected size of the independent set found will not exceed $2(k+1)$ (which by increasing $d$ and $k$ can be made to imply $\rho$ arbitrarily close to~2).

The following graph shows that the RV LP together with our rounding procedure cannot give a value of $\rho > 2$ for MWIS. Consider a graph composed of clique of size $k$ connected via a complete bipartite graph to an independent set of size $k$. Vertices of the clique have weight $2k/(k+1)$ and other vertices have weight~1. The weight of maximum weight independent set is $k$ and its recoverable value is $\rho k/(k+1)$. The RV LP has an optimal solution that assigns value $1/2$ to all vertices. Thereafter, weighted greedy may choose a clique vertex in its first step, thus discarding all other vertices and finding a solution of weight $2k/(k+1)$. Hence $\rho = 2$.

\end{appendix}

\end{document}